\crefname{hypothesis}{Hypothesis}{Hypotheses}
\title{An efficient approach for nonconvex semidefinite optimization via customized alternating direction method of multipliers %\thanks{Submitted to the editors July 2020.}
}
\author{Chuangchuang Sun\thanks{Department of Aerospace Engineering, Mississippi State University(\email{csun@ae.msstate.edu}).}
}
\newcommand{\bea}{\begin{eqnarray}}
\newcommand{\eea}{\end{eqnarray}}
\newcommand{\beas}{\begin{eqnarray*}}
\newcommand{\eeas}{\end{eqnarray*}}
\newcommand{\leftm}{\left[\begin{array}}
\newcommand{\rightm}{\end{array}\right]}
\newcommand{\reals}{\mathbb R}
\newcommand{\symm}{\mathbb S}
\newcommand{\zeros}{\mathbf{0}}
\newcommand{\mC}{\mathcal{C}}
\newcommand{\R}{\mathbb{R}}
\newcommand{\mL}{\mathcal{L}}
\newcommand{\mA}{\mathcal{A}}
\newcommand{\mN}{\mathcal{N}}
\newcommand{\proj}{\mathbf{proj}}
\newcommand{\minimize}[1]{\underset{#1}{\mathrm{min}}\ }
\newcommand{\maximize}[1]{\underset{#1}{\mathrm{max}}\ }
\newcommand{\st}{\textrm{\ \ s.t.\ \ }}
\newcommand{\argmin}[1]{\underset{#1}{\mathrm{argmin}}}
\newcommand{\tr}{\mathbf{Tr}}
\newcommand{\diag}{\mathbf{diag}}
\newcommand{\Diag}{\mathbf{Diag}}
\newcommand{\mb}{\mathbf}
\newcommand{\rank}{\mathbf{rank}}
\newcommand{\prox}{\mathbf{prox}}
\newcommand{\sign}{\mathbf{sign}}
\newcommand{\blkdiag}{\mathbf{blkdiag}}
\newcommand{\ceil}[1]{\left \lceil{#1}\ \right \rceil }
\newcommand{\eg}{\textit{e.g.}}
\newcommand{\ie}{\textit{i.e.}}
\newcommand{\bmat}{\left[\begin{matrix}}
\newcommand{\emat}{\end{matrix}\right]}
\algnewcommand{\Inputs}{%
	\State \textbf{Inputs:}
}
\algnewcommand{\Initialize}{%
	\State \textbf{Initialize:}
}
\algnewcommand{\Outputs}{%
	\State \textbf{Outputs:}
}
\begin{document}

\maketitle

% REQUIRED
\begin{abstract}
  We investigate a class of general combinatorial graph problems, including MAX-CUT and community detection, reformulated as quadratic objectives over nonconvex constraints and solved via the alternating direction method of multipliers (ADMM).
  We propose two reformulations: one using vector variables and a binary constraint, and the other further reformulating the Burer-Monteiro form for simpler subproblems.
  Despite the nonconvex constraint, we prove the ADMM iterates converge to a stationary point in both formulations, under mild assumptions.
  Additionally, recent work suggests that in this latter form, when the matrix factors are wide enough, local optimum with high probability is also the global optimum.
  To demonstrate the scalability of our algorithm, we include results for MAX-CUT, community detection, and image segmentation  benchmark and simulated examples.

\end{abstract}

% REQUIRED
\begin{keywords}
  ADMM, semidefinite optimization, symmetric matrix factorization, optimization with nonconvex constraints, large-scale graph problems
\end{keywords}

% REQUIRED
\begin{AMS}
  	90C22, 90C26
\end{AMS}

\section{Introduction}

We consider rank-constrained semidefinite optimization problems (SDPs) of the type
\begin{equation}
%\begin{array}{ll}
\minimize{Z,X}\  f(Z),\  \st\   \mA(Z) = b,Z = XX^T,X\in \mC
%\end{array}
\label{eq:main}
\end{equation}
where the matrix variable
 $Z\in\symm_+^n$ is a $n\times n$ symmetric semidefinite matrix,  and $X\in \R^{n\times r}$ a low rank symmetric factor.
The linear constraints $\mA(Z) =b$  constrain either the diagonal or trace of $Z$, and the set $\mC$ controls desirable features of the factor--e.g. nonnegativity, integer, norm-1, etc. ($\mC$ may be nonconvex.)
The objective function $f(x)$ is convex, differentiable everywhere, with $L_f$-Lipschitz gradient, but the overall problem \eqref{eq:main} is nonconvex.

This problem is equivalent to many important nonconvex SDPs, such as  the MAX-CUT problem and its related applications \cite{bandeira2016low,fortunato2016community,javanmard2016phase},  rank-constrained nonnegative matrix factorization problem \cite{gillis2011nonnegative,ding2005equivalence}, and constrained eigenvalue problems \cite{da2010cone,gander1991constrained,judice2007eigenvalue}.
It is known that exactly solving \eqref{eq:main} globally is in general a very difficult problem, as it includes many NP-Hard problems.
Methods for heuristically solving \eqref{eq:main} fall in three categories: i)solving the convexified SDP, where \eqref{eq:main} does not have the rank-$r$ or $X\in \mC$ constraint, using any convex optimization method \cite{goemans1995improved, helmberg2000spectral, fujie1997semidefinite}, ii) approximatly solving \eqref{eq:main} using an alternating minimization method  \cite{burer2003nonlinear,boumal2016non} and relying on statistical arguments suggesting that the  acquired local optimal = the global optimal \cite{boumal2016non}, or iii) using other application-specific approaches \cite{lee2001algorithms, fortunato2016community}.
The methods investigated in this paper fall in the second category.
Specifically, we investigate solving \eqref{eq:main} using ADMM and linearized ADMM on two reformulations.
We find that these flexible reformulations allow easy incorporation of low-rank and sparse structure, making the resulting algorithm extremely scalable, in both memory and computation, which we demonstrate on a number of popular applications.

However, often nonconvex formulations of SDPs are not favored because the convergence behavior of standard algorithms are not well understoodn. Specifically, an iterative procedure can do one of four things: diverge, oscillate within a bounded interval, converge to an arbitrary point, or converge to a useful point.
 We show that  linearized ADMM on a nonsymmetric reformulation of \eqref{eq:main} can either converge to a stationary point, or diverge to $\pm \infty$; it cannot oscillate or converge to a non-stationary point. Additionally, for the case without linear constraints, vanilla ADMM  is guaranteed to converge to a stationary point with a monotonically decreasing augmented Lagrangian term, and at a linear rate if the objective is strongly convex.

\section{Applications}
It is well-known that many convex optimization problems can be reformulated as SDPs (e.g. \cite{wolkowicz2012handbook}).
In nonconvex optimization, SDPs are studied in several key areas, as tight convex relaxations of otherwise NP-hard problems.

\subsection{Combinatorial problems}
A simple reparametrization of the constraint $x\in \R^n$, $x_i\in \{-1,1\}$ is as $X = xx^T$, $\diag(X) = 1$. This property has been heavily exploited for finding lower bounds in combinatorial optimization \cite{laurent2009sums, rendl2012semidefinite, goemans1995improved}, and generalized further to polynomial optimization
\cite{blekherman2012semidefinite, anjos2012introduction}.
Of high interest is the MAX-CUT problem
%For an undirected graph with $n$ nodes and (possibly weighted) edges between some pairs of nodes, a \textit{cut} is defined as a partition of the $n$ nodes, and the \textit{cut value} is the sum of the weights of the severed edges. The MAX-CUT problem is to find the cut in a given graph that gives the maximum cut value.
\begin{equation}
%\begin{array}{ll}
\minimize{x\in \R^n}\  x^TCx,\ \st \ x_i\in \{-1,1\},\quad i = 1,\hdots, n
%\end{array}
\label{eq:comb}
\end{equation}
where $C = (A - \diag(A\mb 1))/4$ and $A\in \symm^n$ is the symmetric adjacency matrix of an undirected graph. Written in this way, the solution to \eqref{eq:main} is exactly the maximum cut of an undirected graph with nonnegative weights $A_{ij}$.

This seemingly simple framework appears in many other applications, such as community detection \cite{abbe2016exact} and image segmentation \cite{shi2000normalized}, and is equivalent to the nonconvex SDP
\begin{equation}
%\begin{array}{ll}
\minimize{Z}\  \tr(CZ),\ \st\  Z_{kk} = 1, Z \succeq 0, \rank(Z) = 1.
%\end{array}
\label{eq:comb-SDP}
\end{equation}
Lifting $x\in \R^n$ to a skinny matrix $X\in \R^{n\times k}$ generalizes this technique to  partitioning \cite{karisch1998semidefinite} and graph coloring problems
\cite{karger1998approximate}.

\paragraph{Related works on MAX-CUT}
More generally, combinatorial methods can be solved using branch-and-bound schemes, using a linear relaxation of \eqref{eq:main} as a bound  \cite{barahona1988application,de1995exact}, where the binary constraint $x\in \{-1,1\}$ is relaxed to $0 \leq (x+1)/2 \leq 1$. Historically, these ``polyhedral methods" were the main approach to find exact solutions of the MAX-CUT problem. Though this is an NP-Hard problem, if the graph is sparse enough, branch-and-bound converges quickly even for very large graphs \cite{de1995exact}. However, when the graph is not very sparse, the linear relaxation is loose, and finding efficient branching mechanisms is challenging, causing the algorithm to run slowly.
The MAX-CUT problem can also be approximated by one pass of the linear relaxation (with bound
$\frac{f_{\text{relax}}}{f_{\text{exact}}} \geq 2 \times \#$edges) \cite{poljak1994expected}.

A tighter approximation can be found with the semidefinite relaxation, which is also used for better bounding in branch-and-bound techniques \cite{helmberg1998solving, rendl2007branch, burer2008finite, bao2011semidefinite,krislock2012improved}.
In particular, the rounding algorithm of \cite{goemans1995improved} returns a feasible $\hat x$ given optimal $Z$, and is shown in expectation to satisfy $\frac{x^TCx}{\hat x^TC\hat x} \geq 0.878$. For this reason, the semidefinite relaxation for problems of type \eqref{eq:main} are heavily studied (\eg \cite{poljak1995recipe, helmberg2000semidefinite, fujie1997semidefinite}).

\paragraph{Specialization to community detection}
A small modification of the matrix $C$ generalizes problems of form \eqref{eq:comb} and \eqref{eq:comb-SDP} to community detection in machine learning.
Here the problem is to identify node clusters in undirected graphs that are more likely to be connected with each other than with nodes outside the cluster.
This prediction  is useful in many graphical settings, such as interpreting online communities through social network or linking behavior \cite{papadopoulos2012community}, interpreting biological ecosystems \cite{girvan2002community},  finding disease sources in epidemiology\cite{keeling2005implications}, and many more.
There are many varieties and methodologies in this field, and it would be impossible to list them all, though many comprehensive overviews exist (e.g. \cite{fortunato2016community}).

The stochastic binary model \cite{holland1983stochastic} is one of the simplest generative models for this application. Given a graph with $n$ nodes and parameters $0 < q < p < 1$, the model partitions the nodes into two communities, and generates an edge between nodes in a community with probability $p$ and nodes in two different communities with probability $q$. Following the analysis in \cite{abbe2016exact}, we can define  $C = \frac{p + q}{2}\mb1 \mb1^T - A$, where $A$ is the graph adjacency matrix, and the solution  to \eqref{eq:main} gives a solution to the community detection problem with sharp recovery guarantees.

\subsection{Nonnegative factorization}
For a symmetric matrix $C$, the maximum eigenvalue / eigenvector pair of  $C$ is the solution to the nonconvex optimization problem
\begin{equation}
%\begin{array}{ll}
\maximize{x\in \R^n} \  x^TCx, \ \st\   \|x\|_2 = 1.
%\end{array}
\label{eq:eigenvalue}
\end{equation}
By inverting the sign of $C$, we can transform this into a minimization problem, or equivalently acquire the minimum eigenvalue/eigenvector pair.
Interestingly, despite the nonconvex nature of \eqref{eq:eigenvalue}, we have many efficient globally optimal methods for finding $x$, e.g. Lanczos, Arnoldi, etc.  However, adding any additional constraints, such as nonnegativity of $x$ \cite{queiroz2004symmetric}, and simple methods generally do not work without heavy data assumptions \cite{deshpande2014cone}.
This is of interest in problems such as phase retrieval, recommender systems with positive-only observations, clustering and topic models, etc.
Here we discuss three variations of the nonnegative factorization problem appearing in literature, all of which are special instances of \eqref{eq:main}.

\paragraph{Optimization over spectrahedron}
We can frame \eqref{eq:eigenvalue} as a linear objective over the \emph{spectrahedron}
\begin{equation}
%\begin{array}{ll}
\minimize{Z\in \symm^n} \  \tr(CZ), \ \st\  \tr(Z) = 1, Z \succeq 0.
%\end{array}
\label{eq:eig-spectrahedron}
\end{equation}
If additionally the maximum eigenvalue of $C$ is isolated (corresponding only to one leading eigenvector) then $Z = xx^T$ and $Cx = \lambda_{\max}(C) x$.
To see this, by definition,
\[
\lambda_{\max}(C) = \max_{x : \|x\|_2 = 1} x^TCx = \max_{Z : Z = xx^T, \|x\|_2 = 1} \tr(CZ) = \max_{Z : \tr(Z) = 1,X\succeq 0} \tr(CZ).
\]
As a consequence, note that that though \eqref{eq:eig-spectrahedron} is convex, the solution $Z^*$ will always have rank 1 when $\lambda_{\max}(C)$ has multiplicity 1.
A simple extension of \eqref{eq:eig-spectrahedron} often used in nonnegative PCA \cite{zass2007nonnegative} is
\begin{equation}
\begin{array}{ll}
\minimize{Z\in \symm^n,x\in \R^n} \  \tr(CZ),\ \st\,  \tr(Z) = 1, Z \succeq 0,Z = xx^T,\quad x \geq 0,
\end{array}
\label{eq:nonneg-eigenvalue}
\end{equation}
which is an instance of \eqref{eq:main} with $\mC$ the nonnegative orthant.

\paragraph{Factorization with partial observations}
An equivalent way of formulating the top-$k$ nonnegative-eigenvector problem is as the nonnegative minimizer $X$ to $\|XX^T-C\|_2$ where $X$ is $\R^{n\times k}$. However, in many applications, we may not have full view of the matrix $C$, (e.g. $C$ is  a rating matrix). Suppose that an index set $\Omega$ defines the observed entries, e.g. $\{i,j\}\in \Omega$ implies $C_{ij}$ is known. Then the nonnegative factorization problem can be written as
\begin{equation}
%\begin{array}{ll}
\minimize{Z\in \symm^n,x\in \R^n} \ \displaystyle\sum_{i,j\in \Omega} (Z_{ij}-C_{ij})^2, \ \st \  Z = xx^T,\quad x\geq 0
%\end{array}
\label{eq:nonneg-SDP}
\end{equation}
This formulation exists in \cite{lee1999learning}.

\paragraph{Projective Nonnegative Matrix Factorization}
A third method toward this goal is to optimize over the low rank projection matrix itself~\cite{yuan2005projective}, a variant of nonnegative matrix factorization, solving
\begin{equation}
%\begin{array}{ll}
\min_{Z\in \symm^n,X\in \R^{n\times k}}\ \|B-ZB\|_2, \ \st\   Z  = XX^T, \  X\geq 0
%\end{array}
\label{eq:projMF}
\end{equation}
Here, the data matrix may not even be symmetric, but $\frac{1}{\tr(Z)}ZB$ will approximate the projection of $B$ to its top-$k$ singular vectors.
%\red{Is it obvious why we want $X\geq 0$ here? Also is this easy to do in implementation?}

\section{Related work}

\paragraph{Convex relaxations}
If $r = n$ and $\mC = \symm^n$ then \eqref{eq:main} is a convex problem, and can be solved using many conventional methods with strong convergence guarantees. However, even in this case, if $n$ is large, traditional semidefinite solvers are computationally limiting. In the most general case,  an interior point method solves at each iteration a KKT system of at least order $n^6$, and most first-order methods for general SDPs require eigenvalue decompositions, which are of order $O(n^3)$ per iteration.

\paragraph{Low-rank convex cases}
In fact, assuming low-rank solutions often allows for the construction of faster SDP methods. In  \cite{friedlander2016low}  it is noted that  the rank of primal PSD matrix variable is equal to the multiplicity of the matrix variable arising from the gauge dual formulation, and finding only those $r$ corresponding eigenvectors can recover the primal solution. In \cite{helmberg2000spectral}, a similar observation is made of the Lagrange dual variable  and thus the dual problem can be solved via a modified bundle method. More generally, the recently popularized conditional gradient algorithm (also called the Frank-Wolfe algorithm) efficiently solves norm-constrained problems for nonsymmetric matrices \cite{jaggi2010simple}, exploiting the fact that the dual norm minimizer can be computed efficiently; see also  \cite{candes2009exact, recht2010guaranteed, udell2016generalized}.

\paragraph{Nonconvex cases}
In close connection with these observations, \cite{burer2003nonlinear,burer2005local} proposed simply reformulating semidefinite matrix variables $Z = XX^T$, solving the ``standard" nonconvex SDP
\begin{equation}
\label{eq:reformulation}
%\begin{array}{ll}
\minimize{X\in \R^{n\times r}}\  \langle C, XX^T\rangle, \ \st \ \mA(XX^T) = b
%\end{array}
\end{equation}
by sequentially optimizing the Lagrangian.
%To solve \eqref{eq:main},~\cite{burer2003nonlinear} put forward an algorithm based on augmented Lagrangian method to find local optima of \eqref{eq:main}; unsurprisingly, due to the later observation of Boumel et. al, this method empirically found global optima almost all of the time.
However, solving \eqref{eq:main} is still numerically burdensome; in the augmented Lagrangian term, the objective is quartic in $R$, and is usually solved using an iterative numerical method, such as L-BFGS.

%The now-coined ``Burer-Monteiro method'' today is extended to a general matrix factorization approach  over the factor $X$, which if skinny significantly reduces memory and computational requirements.

\paragraph{Global optimality of a nonconvex problem with linear objective}
A main motivation behind solving rank-constrained problems using convex optimization methods come from key  results in
~\cite{pataki1998rank,barvinok1995problems} which show that for a linear SDP, when $X^*$ is the optimum and $r = \rank(X^*)$, then $\frac{r(r+1)}{2} \geq m$ where $m$ is the number of linear constraints.
Furthermore, a recent work~\cite{boumal2016non} shows that almost all local optima of FSDP are also global optima, suggesting that any stationary point of the FSDP is also a reasonable approximation of \eqref{eq:main},
if the constraint space of\eqref{eq:reformulation} is compact and sufficiently smooth, e.g.
$A_i Y$ linearly independent whenever $\langle A_i, YY^T\rangle = b_i$ for all $i = 1,\hdots, m$. The MAX-CUT problem satisfies this constraint; an example of a linear SDP without this condition is the phase retrieval problem \cite{candes2015phase}, when $m > n$.

\paragraph{Nonconvex constraint $\mC$}
Although there are many cases where the linear constraint in \eqref{eq:main} serves a distinct purpose, largely it is introduced to tighten the convex relaxation. When working in the nonconvex formulation, for many applications, the linear constraint becomes superfluous, and a more useful reformulation may be
\[
%\begin{array}{ll}
\minimize{x,y} \  g(x), \ \st\   x = y, y\in\mC,
%\end{array}
\]
for some nonconvex set $\mC$ (e.g. $\mC = \{-1,1\}^n$).
Note that the projection on $\mC$ is extremely easy, despite its nonconvexity.
Although less explored, this idea is not new; see \cite{boyd2011distributed} chapter 9.

\subsection{ADMM for nonconvex problems}
The alternating direction method of multipliers (ADMM) \cite{glowinski1975approximation,gabay1975dual} is a now popular method \cite{boyd2011distributed} for convex large-scale distributed optimization problems, with understood convergence rates \cite{eckstein2015understanding} and variations \cite{sun2015expected,yinthree,goldstein2014fast}. It is closely related to dual decomposition methods, but alternates its subproblems, and makes use of augmented Lagrangians, which smooths the subproblems and reduces the influence of the dual ascent step size. Although there are extensions to many variable blocks, most ADMM implementations use two variable block decompositions, solving
\[
%\label{e-admm}
%\begin{array}{ll}
\min_x \  g(x) + h(y), \ \st \  Ax = By
%\end{array}
\]
by alternatingly minimizing over each variable in the augmented Lagrangian
\[
 \mL_\rho (x,y;u) = g(x) + h(y) + u^T(Ax-By) + \frac{\rho}{2}\|Ax-By\|_2^2
\]
and then incrementally updating the dual variable:
%\begin{eqnarray*}
\[
x^+ = \arg\min_x\mL_\rho(x,y;u), y^+ = \arg\min_x\mL_\rho(x^+,y;u), u^+ = u + \rho(Ax^+-By^+).
\]
%\end{eqnarray*}
Here, any $\rho > 0$ will achieve convergence.

%\paragraph{Related work}
%Recently, the Alternating Direction Method of Multipliers (ADMM) \cite{gabay1976dual, glowinski1975approximation, lions1979splitting, eckstein1992douglas} has been widely applied and investigated in various fields; see the survey~\cite{boyd2011distributed} and the references therein.
%Specifically, it is favored because it is easy to apply to a distributed framework, and has been shown to converge quickly in practice.
%For convex optimization problems, ADMM can be shown to converge to a global optima in several ways: as a series of contractions of monotone operators \cite{eckstein1992douglas} or as the minimization of a global potential function \cite{boyd2011distributed}.
In general there is a lack of theoretical justification for ADMM on nonconvex problems  despite its good numerical performance.
Almost all works concerning ADMM on nonconvex problems investigate when nonconvexity is in the objective functions (\cite{hong2016convergence,wang2015global,li2015global,magnusson2016convergence,liu2017linearized}, and also
\cite{lu2017nonconvex,xu2012alternating} for matrix factorization)
Under a variety of assumptions (\eg~convergence or boundedness of dual objectives)  they are shown to convergence to a KKT stationary point.

In comparison, relatively fewer works deal with nonconvex constraints. ~\cite{jiang2014alternating} tackles polynomial optimization problems by minimizing a general objective over a spherical constraint $\|x\|_2 = 1$,  ~\cite{huang2016consensus} solves general QCQPs, and \cite{shen2014augmented} solves the low-rank-plus-sparse matrix separation problem.
In all cases, they show that all limit points are also KKT stationary points, but do not show that their algorithms will actually converge to the limit points. In this work, we investigate a class of nonconvex constrained problems, and show with much milder assumptions that the sequence always converges to a KKT stationary point.

\section{Linearized ADMM on full SDP}
%Scalable SDP solvers must exploit both low-rank and sparsity, in both memory and computation. We argue that using ADMM directly on the nonconvex formulation can more easily accomplish both goals.
We first investigate a reformulation of \eqref{eq:main} as
\begin{equation}
\label{eq:reform-1}
%\begin{array}{ll}
\minimize{Z,X,Y}  f(Z) + \delta_{\{0\}}(\mA(Z) - b) + \delta_{\mC}(Y), \st  Z = (XY^T)_\Omega, X = Y\\
%\end{array}
\end{equation}
with variables $Z\in \symm^{n\times n}$, $X\in \R^{n\times r}$, and  $Y\in \R^{n\times r}$. The affine and $\mC$ constraints are lifted to the objective via an indicator function
\[
\delta_{\mC}(x) =
\begin{cases}
0 & \text{ if } x\in \mC,\\
\infty & \text{ else.}
\end{cases}
\]
The notation $A_{\Omega}$ for a symmetric matrix $A$ is the projection of $A$ on the sparsity pattern $\Omega$:
\[
(A_{\Omega})_{ij} =
\begin{cases}
A_{ij}, & \text{ if } \{i,j\}\in \Omega\\
0, & \text{ else,}
\end{cases}
\]
and we write $A\in \symm_{\Omega}^n$ if $A_{\Omega} = A$.
Specifically, $\Omega$ captures the \emph{effective sparsity} of the problem;
that is, $f(Z) = f(Z_{\Omega})$ and $\mA(Z) = \mA(Z_{\Omega})$.
We assume $\{i,i\}\in \Omega$ for all $i$, so the second is trivially true.

\paragraph{Duality}
As shown in
\cite{rockafellar1974augmented}, a notion of a dual problem can be established via the augmented Lagrangian of \eqref{eq:reform-1}
\begin{eqnarray}
\mL_\rho(Z,X,Y;S,U) &=&  f(Z)  + \delta_{\mC}(Y)  + \langle U,X-Y\rangle
 +\langle S,Z - XY^T\rangle \nonumber\\
 &&\quad + \frac{\rho}{2}\|X-Y\|_F^2
  + \frac{\rho}{2}\|Z - XY^T\|_F^2\label{eq:AugLag-1}
\end{eqnarray}
where the dual problem is
$
\maximize{S,U} \min_{Z,X,Y}\mL_\rho(Z,X,Y;S,U).
$
The minimization of $\mL_\rho$ over $Z$ and $X$ is the solution to
\begin{equation}
\label{eq:optimalXZ}
\begin{array}{rcl}
\nabla f(Z) - \mA^*(\nu) + S + \rho(Z-XY^T) &=& 0\\
U -S Y + \rho(XY^TY-ZY) + \rho(X-Y) &=& 0\\
\mA(Z) &=& b
\end{array}
\end{equation}
where $\nu > 0$ is a Lagrange dual variable for the local constraint $\mA(Z) = b$.
The minimization of $\mL_\rho$ over  $Y$ is the solution to the generalized projection problem
\begin{equation}
\label{eq:optimalY}
\min_{Y\in \mC} \quad \langle Y - \hat Y, Y-\hat Y\rangle_H = \tr(( Y - \hat Y)H ( Y-\hat Y)^T)
\end{equation}
where
$
\hat Y = U + SX + \rho (X+Z^TX),\quad
H = \rho (I+X^TX).
$
For general nonconvex problems, it is difficult to guarantee global minimality.
Here we introduce two sought-after properties that are more reasonably attainable.
\begin{definition}\cite{clarke1990optimization}
The \emph{tangent cone} of a nonconvex set $\mC$ at $x$ is  given by
\[
\mathcal T_\mC(x) = \{d : \text{ for all } t\to 0, \hat x \to x, \hat x \in \mC, \text{ there exists } \hat d \to d, \hat x +t \hat d\in \mC\}.
\]
The \emph{normal cone} of $\mC$  at $x$ ($:= \mN_\mC(x)$) is the polar of the tangent cone.
\end{definition}
\begin{definition}
For a minimization of a smooth constrained function $\underset{x\in \mC}{\min}\;f(x)$  we say that $x^*$ is a \emph{KKT-stationary point} if $-\nabla f(x^*)\in \mN_\mC(x^*)$.
\label{def:stationary}
\end{definition}

\begin{definition}
For a function defined over $M$ variables $\mL(X_1,\hdots, X_m)$, we say that $X_1^*,\ldots,X_m^*$ are \emph{(block) coordinatewise minimum points} if for each $k = 1,\hdots, m$,
$
X_k^* = \argmin{X}\;\mL(X_1^*,\ldots,X_{k-1}^*,X,X_{k+1}^*,\hdots, X_m^*).
$
\label{def:coordinate}
\end{definition}
Note that it is not always the case that stationarity is stronger than coordinatewise minimum. A simple example is $\mC = \{-1,1\}^n$. Then for all points $x\in \mC$, the tangent cone is $\{0\}$ and the normal cone is $\R^n$. Then  every point in $\mC$ is stationary, no matter what the objective function.

\begin{proposition}
If Alg. \ref{a:matrix} converges to  coordinatewise minimum points\\
 $((X,Z)^*, Y^*, S^*, U^*)$, then the primal points 
 i)  satisfy \eqref{eq:optimalXZ} for some choice of $\nu \geq 0$,
ii)  minimize \eqref{eq:optimalY},
iii) and are primal-feasible, e.g. $X^* = Y^*$ and $(X^*(Y^T)^*)_\Omega = Z^*$.
Furthermore, $(X^*, Y^*, Z^*, S^*, U^*)$ are stationary points of \eqref{eq:AugLag-1}
\end{proposition}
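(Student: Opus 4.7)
The plan is to show that each bullet of the proposition corresponds to a distinct first-order condition on a particular block of the augmented Lagrangian, and that the vanishing of dual updates at a fixed point forces primal feasibility.

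First, I would unpack the coordinatewise minimum assumption block by block against $\mL_\rho$ in \eqref{eq:AugLag-1}. The $(Z,X)$ block (which I expect to be updated jointly in Alg.~\ref{a:matrix} because the linear constraint $\mA(Z)=b$ naturally sits with the $Z$ update) gives the stationarity conditions of the smooth problem $\min_{Z,X}\mL_\rho(Z,X,Y^*;S^*,U^*)$ subject to $\mA(Z)=b$. Writing out $\partial \mL_\rho/\partial Z = 0$ introduces the dual multiplier $\nu$ for $\mA(Z)=b$ and produces the first equation of \eqref{eq:optimalXZ}, while $\partial \mL_\rho/\partial X = 0$ gives the second; the primal constraint $\mA(Z^*)=b$ is the third. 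This establishes (i). For (ii), the $Y$-block coordinatewise minimum is by definition
\[
Y^* = \argmin{Y\in\mC}\ \mL_\rho(Z^*,X^*,Y;S^*,U^*),
\]
and a direct expansion of the quadratic terms in $Y$ inside $\mL_\rho$ reproduces exactly the projection problem \eqref{eq:optimalY} with the stated $\hat Y$ and $H$.

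Next, for (iii) I would invoke the dual updates hidden inside Alg.~\ref{a:matrix}. Standard ADMM prescribes $S^+ = S + \rho(Z - (XY^T)_\Omega)$ and $U^+ = U + \rho(X-Y)$. Coordinatewise fixed-pointedness of the dual blocks is equivalent to $S^{*,+}=S^*$ and $U^{*,+}=U^*$, which in turn forces $X^*=Y^*$ and $(X^*(Y^*)^T)_\Omega = Z^*$. Note that the restriction to $\Omega$ is harmless because $f$ and $\mA$ depend only on $Z_\Omega$, so the only portion of $Z$ that the dual update constrains is the on-$\Omega$ part, which is precisely what is claimed.

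Finally, for the overall stationarity claim, I would combine the three items above. The conditions in (i) and (ii) are precisely the first-order (KKT/normal-cone) conditions for the individual-block minimizations of $\mL_\rho$, and (iii) together with $\nu^*$ from step~1 shows that $(X^*,Y^*,Z^*,S^*,U^*,\nu^*)$ satisfies the full KKT system of \eqref{eq:reform-1}; equivalently, $0\in \partial_Y \mL_\rho + \mN_\mC(Y^*)$ and $\nabla_Z \mL_\rho, \nabla_X \mL_\rho$ vanish on the feasible set, which is the definition of a stationary point of $\mL_\rho$ in the sense of Definition~\ref{def:stationary}. The main obstacle I anticipate is purely bookkeeping: carefully tracking the $\Omega$-projection through the gradient of $\frac{\rho}{2}\|Z-(XY^T)_\Omega\|_F^2$ and verifying that the adjoint $\mA^*(\nu)$ appears with the correct sign when promoting the constraint $\mA(Z)=b$ into the $Z$-update; the nonconvex set $\mC$ is handled cleanly since the $Y$-step is an orthogonal projection in the $H$-inner-product, which needs no convexity to characterize as a coordinatewise minimizer.
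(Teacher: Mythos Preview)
Your proposal is correct and follows essentially the same route as the paper's own proof, which is very terse: it declares (i)--(iii) ``clear'' from the algorithm structure and then argues stationarity by noting that $\mL_\rho$ is jointly convex in $(X,Z)$ (so the coordinatewise minimum gives $\nabla_{X,Z}\mL_\rho=0$) and that the $Y$-step is a projection onto $\mC$ (so $-\nabla_Y\bar\mL_\rho\in\mN_\mC(Y^*)$). The only cosmetic difference is that for (iii) you invoke the fixed-point condition of the dual updates $S^+=S$, $U^+=U$, whereas the paper equivalently phrases this as $\nabla_{S,U}\mL_\rho=0$; these are the same equations.
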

\begin{proof}
It is clear that the convergent points of Alg. \ref{a:matrix} exactly satisfy the three conditions.
To show that these points are stationary, note that the augmented Lagrangian is convex with respect to $X,Z$ jointly, and is a projection on a compact set $\mC$ with respect to $Y$. Therefore
\[
\nabla_{X,Z,S,U} \mL_\rho(Z^*,X^*,Y^*;S^*,U^*) = 0, \qquad -\nabla_Y \bar \mL_\rho(Z^*,X^*,Y^*;S^*,U^*) \in \mN_\mC(Y^*)
\]
where
$
\bar \mL_\rho(Z,X,Y;S,U) =  - \langle U,Y\rangle
 -\langle S, XY^T\rangle + \frac{\rho}{2}\|X-Y\|_F^2
  + \frac{\rho}{2}\|Z - XY^T\|_F^2
$
 with all the differentiable terms of $\mL_\rho$ involving $Y$.
\end{proof}

\subsection{Linearized ADMM}
We propose to solve \eqref{eq:reform-1} via the \emph{linearized} ADMM, e.g. where at each iteration, the objective is replaced by its current linearization
\[
f(Z) \approx \hat f^k(Z) := f(Z^{k-1}) + \langle \nabla f(Z^{k-1}), Z-Z^{k-1} \rangle.
\]
We then build the linearized augmented Lagrangian function as
\begin{eqnarray}
\hat \mL^k(Z,X,Y;S,U) &=&  g_k(X,Z) +  h(Y)  + \langle U,X-Y\rangle
 +\langle S,Z - XY^T\rangle +\nonumber \\
 &&\quad \frac{\rho}{2}\|X-Y\|_F^2
  + \frac{\rho}{2}\|Z - XY^T\|_F^2\label{eq:AugLag-lin}
\end{eqnarray}
where
$
g_k(X,Z) = \hat f^k(Z) + \delta_{\{0\}} (\mA(Z)-b) ,\quad h(Y) = \delta_{\mC}(Y)
$
and $S\in\reals^{n\times n}$ and $U\in\reals^{n\times r}$ are the dual variables corresponding to the two coupling constraints.
The full algorithm is given in Alg. \ref{a:matrix}.

\begin{algorithm}
	\caption{ADMM for solving \eqref{eq:reform-1}}
	\label{a:matrix}
	\begin{algorithmic}[1]
		\Inputs{$\rho_0>0$, $\alpha>1$, tol $\epsilon > 0$}
		\Initialize{$Z^0, X^0;S^0,U^0$ as random matrices}
		\Outputs {$Z$, $X=Y$}
		\For{$k = 1 \hdots$}
		\State Update ${Y^{k+1}}$ the solution of
			\begin{equation}\label{eq:xnewXxy}
			%\begin{array}{ll}
			\minimize{Y\in \R^{n\times k}}  \|Z^{k} - X^{k}Y^T + \frac{S^k}{\rho^{k}}\|_F^2
			+ \|X^{k}-Y + \frac{U^k}{\rho^k}\|_F^2, \st  Y\in \mC
			%\end{array}
			\end{equation}
		\State Update $(Z,X)^{k+1}$ as the solutions of
			\begin{equation}\label{eq:z1z2Xxy}
			%\begin{array}{ll}
			\underset{X,Z\in\symm_{\Omega}^n}{\min}  \mL_{k+1}(Z,X,Y^{k+1};S^{k},U^{k};\rho^k), \st   \mA(Z) = b
			%\end{array}
			\end{equation}
			where $\mL$ is the linearized augmented Lagrangian as defined in \eqref{eq:AugLag-lin}.
		\State Update $S, U$ and $\rho$ via
			\bea\label{eq:LambdaUpdXxy}
			S^{k+1} &=& S^{k} + \rho^k({Z}^{k+1} - {X}^{k+1}({Y}^{k+1})^T)_\Omega\nonumber\\
			U^{k+1} &=& U^{k} + \rho^k({X}^{k+1} - {Y}^{k+1})\nonumber\\
			\rho^{k+1} &=& \alpha\rho^{k}
			\eea
		\If{$\max \{\|X^{k}-Y^k\| ,\|(Z^{k}-X^k(Y^k)^T)_{\Omega}\|\} \leq \epsilon$}
		\State {\textbf{break}}
		\EndIf
		\EndFor
	\end{algorithmic}
\end{algorithm}

\paragraph{Minimizing over $Y$}
The generalized projection \eqref{eq:optimalY} can be solved a number of ways.
Note if $r = 1$ then $H$ is a positive scalar, and the problem reduces to $Y^+ = \proj_{Y\in \mC} \left(\frac{1}{H}\hat Y\right)$.
When $\mC = \{-1,1\}^n$, this process reduces to recovering the signs of  $\hat Y$ i.e., $Y_i = \textbf{sign}_{\mC}(\hat Y_i)$,
 and when $\mC = \{u:\|u\|_2 = 1\}$ the set of unit-norm vectors, $Y$ is just a properly scaled version of $\hat Y$: $Y = \frac{1}{\|\hat Y\|_2}\hat Y.$
 However, in general, it is difficult to compute the generalized projection over a nonconvex set.
 When $\mC$ is convex, the generalized projection problem \eqref{eq:optimalY} can be computed using projected gradient descent. Note that the objective of \eqref{eq:optimalY} is 1-strongly convex; thus we expect fast convergence in this subproblem. In practice, we find that if $r$ is not too large, often a few tens of iterations is enough.

\paragraph{Minimizing over $X$ and $Z$.}
Using standard linear algebra techniques, the linear system \eqref{eq:optimalXZ} can be reduced to a few simple instructions. First, we solve for the Lagrange dual variable $\nu$ associated with the linear constraints (and localized to the minimization of $X$ and $Z$):
\begin{equation}
\label{eq:solvenu}
\mA(\mA^*(\nu)(YY^T + I)) = \rho(b-\mA(DY^T+YY^T)) + \mA((G+S)(I + YY^T)),
\end{equation}
where $D =  \frac{1}{\rho}(S Y -U) + Y$ and $G = \nabla f(Z^{k-1})$ the local gradient estimate.
When $\mA = \diag$, \eqref{eq:solvenu} reduces to $n$ scalar element-wise computations \\
$
\nu_i = \frac{ \rho(b-(DY^T)_{ii}) + ((G+S)(I + YY^T))_{ii}}{(YY^T)_{ii}+1}.
$
When $\mA = \tr$,
$
\nu = \frac{\rho(b-\tr(DY^T)+\tr((G+S)(I + YY^T))}{\tr(YY^T) + 1}.
$
Note that in both cases, no $n\times n$ matrix need ever be formed, so the memory requirement remains $O(nr)$. (See appendix for elaboration.)
Then the primal variables are recovered via
$
X = BY + D, \quad \text{ and } \quad
Z = (XY^T)_\Omega + B,
$
with
$
B = -\frac{1}{\rho}(C-\mA^*(\nu) + S).
$
In these cases, the complexity is dominated by multiplications between $n\times n$ and $n\times r$ matrices. Thus, the method is especially efficient when $r \ll n$.

\subsection{Convergence analysis}

%In this section, we aim to establish the global convergence of Algorithms 1 to a stationary point of \eqref{eq:FormulationXxy1} under some conditions~.

%\begin{Assumption}
%\begin{enumerate}
%\item
%The sequences $\{f(Z^k)\}$ and $\{S^k,U^k\}$ are bounded in norm.
%\item
%$f(Z)$ has bounded non-convexity. Mathematically, for any $Z_1,Z_2\in\symm^n$, it holds that $f(Z_1) \ge f(Z_2) + \langle \nabla f(Z_2), Z_1-Z_2\rangle - \frac{L_f}{2}\|Z_1-Z_2\|_F^2$ with $L_f < +\infty$. \red{Is this ok, even with $H_g$?}
%\end{enumerate}
%\end{Assumption}

\begin{theorem}
Assume that $f(Z)$ is $L_f$-smooth.
Assume the dual variables are bounded,
 e.g.
$
\max\{\|S^k\|_F,\|U^k\|_F,\|Y^k\|_F\}_k \leq B_P < +\infty,
$
and
$\frac{L_f}{\sigma_{\max}}$ is bounded above, where
$
\sigma_{\max} = 1 - \frac{\sqrt{\sigma_Y^4 + 4\sigma_Y^2} - \sigma_Y^2}{2},\quad \sigma_Y = \|Y^{k+1}\|_2.
$
Then by running  Alg. \ref{a:matrix} with $\rho^k = \alpha \rho^{k-1}=\alpha^k\rho_0$, if $\mL_k$ is bounded below, then
  the sequence  $\{P^k,D^k\}$ converges to a stationary point of \eqref{eq:AugLag-1}.
\end{theorem}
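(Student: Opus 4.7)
My plan is to run a nonconvex-ADMM Lyapunov argument tailored to (a) the linearization of $f$ in the $(X,Z)$ subproblem and (b) the geometrically growing penalty $\rho^k=\alpha^k\rho_0$. Three stages: (1) show the primal residuals decay geometrically, (2) establish a sufficient-descent inequality for $\mL_\rho$, and (3) pass each subproblem's first-order conditions to a limit.

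Stage (1) is short and uses only the bounded-dual hypothesis. From the dual updates and $\|S^k\|_F,\|U^k\|_F\le B_P$,
\[
\|(Z^{k+1}-X^{k+1}(Y^{k+1})^T)_\Omega\|_F\le \tfrac{2B_P}{\rho^k},\qquad \|X^{k+1}-Y^{k+1}\|_F\le \tfrac{2B_P}{\rho^k};
\]
since $\rho^k\to\infty$ geometrically, both residuals are geometrically summable, and any limit point is primal-feasible.

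Stage (2) is the crux. The key simplification is that the $(X,Z)$ subproblem, even with linearization, is jointly strongly convex, so unlike the nonconvex $Y$ subproblem we get a genuine quadratic descent. I would decompose
\[
\mL_{k+1}(P^{k+1};D^{k+1})-\mL_k(P^k;D^k)=\Delta_Y+\Delta_{XZ}+\Delta_D+\Delta_\rho.
\]
The $Y$-step contributes $\Delta_Y\le 0$ by exact minimization over $\mC$; I do not try to extract a quadratic bound here because $\mC$ is nonconvex, and will recover $\|Y^{k+1}-Y^k\|_F^2$-control downstream from primal feasibility. For the $(X,Z)$-step I combine its strong-convexity gain with the descent lemma
\[
f(Z^{k+1})\le \hat f^{k+1}(Z^{k+1})+\tfrac{L_f}{2}\|Z^{k+1}-Z^k\|_F^2
\]
to translate the $\hat\mL$-decrease back to $\mL_\rho$. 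A Schur-complement calculation on the joint $(X,Z)$ Hessian of $\tfrac{\rho^k}{2}\|Z-X(Y^{k+1})^T\|_F^2+\tfrac{\rho^k}{2}\|X-Y^{k+1}\|_F^2$ identifies its smallest eigenvalue as $\rho^k\sigma_{\max}$: the formula in the statement is precisely the smaller root of $\sigma^2-(2+\sigma_Y^2)\sigma+1=0$, which is the smaller eigenvalue of the $2\times 2$ matrix $\bigl(\begin{smallmatrix}1&\sigma_Y\\\sigma_Y&1+\sigma_Y^2\end{smallmatrix}\bigr)$ obtained after SVD-diagonalizing $Y^{k+1}$. Hence
\[
\Delta_{XZ}\le -\tfrac{\rho^k\sigma_{\max}-L_f}{2}\bigl(\|X^{k+1}-X^k\|_F^2+\|Z^{k+1}-Z^k\|_F^2\bigr),
\]
which is genuine descent once $\rho^k\sigma_{\max}>L_f$; since $L_f/\sigma_{\max}$ is bounded and $\rho^k\to\infty$, this holds for all $k\ge k_0$. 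Finally, $\Delta_D\le 8B_P^2/\rho^k$ and $\Delta_\rho=O(\alpha^{-k})$ by stage (1), both geometrically summable. Telescoping from $k_0$ and using the lower bound on $\mL_k$ yields $\sum_k\rho^k(\|X^{k+1}-X^k\|_F^2+\|Z^{k+1}-Z^k\|_F^2)<\infty$, hence $\|X^{k+1}-X^k\|_F=o(\alpha^{-k/2})$ (summable, therefore Cauchy) and likewise for $Z$; the relation $Y^{k+1}=X^{k+1}+O(\alpha^{-k})$ from stage (1) makes $\|Y^{k+1}-Y^k\|_F$ summable too.

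Stage (3) is then routine. Bounded iterates plus Cauchy primal sequences give full convergence, and the dual iterates converge by boundedness and stage (1). I would then pass each subproblem's optimality condition to the limit: outer semicontinuity of $\mN_\mC$ yields $-\nabla_Y\bar\mL_\rho(Z^*,X^*,Y^*;S^*,U^*)\in\mN_\mC(Y^*)$, the $(X,Z)$-step yields \eqref{eq:optimalXZ} at $(X^*,Z^*)$ with some limiting multiplier $\nu^*$, and stage (1) supplies primal feasibility. Together these exhibit the limit as a stationary point of \eqref{eq:AugLag-1}. The main obstacle I anticipate is in stage (2): identifying $\rho^k\sigma_{\max}$ as the \emph{exact} minimum eigenvalue of the joint $(X,Z)$ Hessian (not a loose bound) so that the linearization error $\tfrac{L_f}{2}\|Z^{k+1}-Z^k\|_F^2$ is cleanly absorbed by the primal gain; the cryptic algebraic form of $\sigma_{\max}$ is the fingerprint of that calculation being tight.
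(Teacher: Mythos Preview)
Your proposal is correct and tracks the paper's argument closely: the same Schur-complement identification of $\rho^k\sigma_{\max}$ as the minimum eigenvalue of $\nabla^2_{(X,Z)}\hat\mL$, the same absorption of the linearization error $\tfrac{L_f}{2}\|Z^{k+1}-Z^k\|_F^2$ into the $(X,Z)$-strong-convexity gain, and the same geometric summability of the dual-ascent and penalty-increase terms driven by the bounded-dual hypothesis and $\rho^k=\alpha^k\rho_0$.

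The one genuine structural difference is the $Y$-step. The paper does extract a quadratic term there, bounding $\mL^Y-\mL^k\le -\tfrac{\rho^k}{2}\|Y^{k+1}-Y^k\|_F^2$ by combining the $\rho^k$-strong convexity of the smooth part of $\mL$ in $Y$ (Lemma~\ref{lem:hessianboundY}) with optimality of $Y^{k+1}$ over $\mC$; this feeds directly into the telescoping sum and yields $Y^{k+1}-Y^k\to 0$ alongside $X$ and $Z$. You instead take only $\Delta_Y\le 0$ and recover $Y$-convergence afterward from the primal-feasibility bound $\|X^{k+1}-Y^{k+1}\|_F\le 2B_P/\rho^k$ of stage~(1) combined with the Cauchy property of $\{X^k\}$. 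Your route is more cautious about the nonconvexity of $\mC$ (the variational-inequality step the paper invokes is not automatic for nonconvex sets), and it buys you slightly more: summability of successive differences and hence full convergence of the primal sequence, whereas the paper's telescoping argument as written only yields $P^{k+1}-P^k\to 0$. Either way the remaining passage to the limit in the subproblem optimality conditions is identical.
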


\begin{proof}
  See section \ref{sec:matrixproof} in the appendix.
\end{proof}

%If we pick $r = 1$, then $X$ and $Y$ are length-$n$ binary vectors, and $x = X$ is a locally optimal solution of \eqref{eq:main}. If $\frac{r(r+1)}{2} > n$ (e.g, $\ceil{\sqrt{2n}}$), $X$ can be projected on a rank-1 solution using a randomization technique popularized by Goemans/Williamson (\cite{goemans1995improved}). For max-cut problem, such randomization technique can generate an approximately optimal cut within a ratio of $0.878$.

\begin{corollary}
	If $r \geq \ceil{\sqrt{2n}}$ and the  stationary point of Algorithm \ref{a:matrix} converges to a second order critical point of \eqref{eq:main}, then it is globally optimal for the convex relaxation of \eqref{eq:reformulation}~\cite{boumal2016non}.
\end{corollary}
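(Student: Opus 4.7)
The plan is to reduce the corollary to a direct invocation of the main result of Boumal et al.~\cite{boumal2016non}, whose hypotheses I first verify for our problem setting. Recall that their theorem asserts that, for a smooth linear SDP whose constraint manifold is compact and satisfies a linear-independence constraint qualification, every second-order critical point of the Burer--Monteiro factorization with factor width $r$ satisfying $r(r+1)/2 > m$ (where $m$ is the number of linear constraints on $Z$) is a global minimizer of the convex SDP. My task is therefore to (i)~check that \eqref{eq:main} falls into the Boumal et al.\ framework, (ii)~match the rank threshold $\lceil\sqrt{2n}\rceil$ to their counting bound, and (iii)~translate ``stationary point of Algorithm~\ref{a:matrix}'' into ``second-order critical point of the factorized problem''---this last step is already assumed in the statement.

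First, I would check the rank threshold. For the applications of interest (MAX-CUT, community detection, image segmentation), the linear operator $\mA$ encodes either the $n$ diagonal constraints $Z_{ii}=1$ or a single trace constraint, so $m \le n$. Solving $r(r+1)/2 \ge m$ with $m=n$ gives the smallest integer $r^\star$ with $r^\star(r^\star+1) \ge 2n$, which is bounded by $\lceil\sqrt{2n}\rceil$. Hence the hypothesis $r \ge \lceil\sqrt{2n}\rceil$ in the corollary implies that Boumal et al.'s rank condition is met.

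Next, I would verify the regularity hypotheses on the feasible manifold $\mathcal{M}_r = \{X \in \reals^{n\times r} : \mA(XX^T)=b\}$. For $\mA = \diag$ (the MAX-CUT-type case), the rows of $X$ are constrained to have unit norm, so $\mathcal{M}_r$ is a product of $(r-1)$-spheres, hence compact and smooth, and the gradients $\{A_i Y\}$ of the active constraints are linearly independent whenever $Y \in \mathcal{M}_r$. For $\mA = \tr$, the manifold is a single Frobenius sphere, which is likewise smooth and compact. Both cases therefore satisfy the conditions needed by~\cite{boumal2016non}.

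Finally, combining these observations, the corollary follows by direct application: under the stated hypothesis, the limit point $X^*$ returned by Algorithm~\ref{a:matrix} is a second-order critical point of the factorized problem \eqref{eq:reformulation} on a compact smooth manifold with $r(r+1)/2 \ge m$; by~\cite{boumal2016non}, $Z^* = X^*(X^*)^T$ is then a global optimum of the convex relaxation of \eqref{eq:reformulation}. The main obstacle is not analytical but epistemic: the corollary assumes the ADMM limit is a second-order critical point, whereas Theorem~1 of this paper only guarantees a first-order KKT/stationary point of the augmented Lagrangian. Thus a more careful version of the corollary would either strengthen the algorithmic analysis---for instance by showing that the $Y$-update in \eqref{eq:xnewXxy}, being a strongly convex generalized projection, rules out strict saddles---or invoke a random-initialization escape argument along the lines of~\cite{boumal2016non} to ensure second-order criticality in practice. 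I would flag this gap explicitly and either prove escape-from-saddle as an auxiliary lemma or state it as an assumption inherited from the cited work.
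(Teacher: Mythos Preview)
Your proposal is correct and matches the paper's approach: the paper gives no proof at all for this corollary, simply appending the citation~\cite{boumal2016non} to the statement and treating it as an immediate consequence of that external result. Your write-up is in fact more thorough than the paper's treatment---you verify the rank-counting condition $r(r+1)/2 \ge m$ against $m \le n$, check the manifold regularity for $\mA = \diag$ and $\mA = \tr$, and explicitly flag the gap between the first-order stationarity guaranteed by the preceding theorem and the second-order criticality assumed in the corollary; the paper does none of this and leaves the reader to consult~\cite{boumal2016non} directly.
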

%\begin{proof}
%  Given the search space of the SDR \eqref{eq:FormulationXxy0SDR} is compact and that of \eqref{eq:FormulationXxy1} is a smooth manifold, combining Theorem \ref{lemma:LagDecentXxy} and Theorem 2 in~\cite{boumal2016non} will trivially result to the conclusion.
%\end{proof}

Unfortunately, the extension of KKT stationary points to global minima is not yet known when $\frac{r(r+1)}{2} < n$ (\ie, $r = 1$). However, our empirical results suggest that even when $r = 1$, often a local solution to \eqref{eq:reformulation} well-approximates the global solution to \eqref{eq:main}.

\section{ADMM on simplified nonconvex SDP}
When the linear constraints are not present, \eqref{eq:main} can be reformulated without $Z$, into
\begin{equation}
\label{eq:reform-2}
%\begin{array}{ll}
\minimize{X,Y}  g(X) + \delta_{\mC}(Y), \st  X = Y
%\end{array}
\end{equation}
with matrix variables $X\in \R^{n\times r},Y\in\R^{n\times r}$, and
where $g(X) = f(XX^T)$ is smooth.
We can also define an augmented Lagrangian of \eqref{eq:reform-2} as
$
\mL_\rho(X,Y;U) = g(X) + \delta_{\mC}(Y) + \langle U,X-Y\rangle + \frac{\rho}{2}\|X-Y\|_F^2.
$

\begin{theorem}
The coordinatewise minimum points $X^*=Y^*$ satisfying
\begin{equation}
\label{eq:coordinate-2}
\begin{array}{rcl}
0 &=& \nabla g(X^*) + U + \rho(X-Y) \\
Y &=& \displaystyle \proj_{\mC}(X+\frac{1}{\rho}U)\\
X &=& Y
\end{array}
\end{equation}
are the stationary points of
the problem
\begin{equation}
\label{eq:reform-simp}
%\begin{array}{ll}
\minimize{X}  g(X), \st  X\in \mC.
%\end{array}
\end{equation}
\end{theorem}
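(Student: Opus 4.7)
The plan is to start from the three coordinatewise conditions in \eqref{eq:coordinate-2}, eliminate the dual variable $U$, and reduce the projection equation to a first-order necessary condition for $\min_{X \in \mC} g(X)$ in the sense of Definition~\ref{def:stationary}. First, I would use the feasibility relation $X^* = Y^*$ together with the stationarity equation $0 = \nabla g(X^*) + U + \rho(X^*-Y^*)$ to conclude that $U = -\nabla g(X^*)$. Substituting back into the projection equation then yields
\[
X^* \;=\; \proj_{\mC}\!\left(X^* - \tfrac{1}{\rho}\nabla g(X^*)\right),
\]
which is the familiar proximal-gradient fixed-point equation, but now over a possibly nonconvex set $\mC$.

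Second, I would interpret the projection via its defining variational problem: $X^*$ is a minimizer over $\mC$ of the smooth function $Y \mapsto \tfrac{1}{2}\|Y - (X^* - \tfrac{1}{\rho}\nabla g(X^*))\|_F^2$. Applying Definition~\ref{def:stationary} to this smooth constrained problem, the KKT condition reads
\[
-\left(X^* - (X^* - \tfrac{1}{\rho}\nabla g(X^*))\right) \;=\; -\tfrac{1}{\rho}\nabla g(X^*) \;\in\; \mN_{\mC}(X^*).
\]
Since the normal cone $\mN_{\mC}(X^*)$ is a cone, multiplying through by $\rho > 0$ preserves membership, giving $-\nabla g(X^*) \in \mN_{\mC}(X^*)$. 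By Definition~\ref{def:stationary} applied to \eqref{eq:reform-simp}, this is exactly the statement that $X^*$ is a KKT-stationary point of $\min_{X\in\mC} g(X)$.

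The only delicate point is the step in which I invoke the first-order necessary condition for a minimizer of a smooth function over a nonconvex set $\mC$, since on a convex set it would follow from the classical variational inequality but here it is justified through the tangent/normal cone pair of Clarke. I would note that Definition~\ref{def:stationary} already builds in exactly the right calculus: a point $X^*$ minimizing a smooth $F$ over $\mC$ has $-\nabla F(X^*) \in \mN_\mC(X^*)$, because any feasible direction $d \in \mathcal T_\mC(X^*)$ must satisfy $\langle \nabla F(X^*), d\rangle \geq 0$, and $\mN_\mC$ is the polar of $\mathcal T_\mC$. Aside from this, the proof is a direct algebraic manipulation plus a one-line application of the projection's first-order condition, so I expect the write-up to be short.
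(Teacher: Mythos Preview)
Your proposal is correct and follows essentially the same route as the paper: both arguments combine the three conditions to obtain the fixed-point equation $X^* = \proj_{\mC}\!\left(X^* - \tfrac{1}{\rho}\nabla g(X^*)\right)$ and then read off the first-order optimality condition of the projection over $\mC$ to conclude stationarity. The only cosmetic difference is that the paper phrases the optimality condition as a local variational inequality $\langle \nabla g(X^*), X - X^*\rangle \leq 0$ for $X \in \mC$ near $X^*$ (citing Clarke), whereas you work directly with the normal-cone inclusion from Definition~\ref{def:stationary} and use that $\mN_\mC(X^*)$ is a cone to absorb the factor $1/\rho$; these are the same statement.
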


\begin{proof}
The KKT stationary points of \eqref{eq:reform-simp} can be characterized in terms of the normal cone of $\mC$ at $X^*$; specifically, $X^*$ is stationary iff
\[
\langle \nabla g(X^*),X-X^*\rangle \leq 0,\quad \forall X\in \mC\cap\mN_\epsilon(X^*)
\]
where $\mN_\epsilon(X^*)$ is some small neighborhood containing $X^*$.
(This is an equivalent definition of the Clarke stationary point\cite{clarke1990optimization}, since in a close enough neighbourhood to $X^*$, the subdifferential of $\delta_{\mC}(x)$ is $\mN_{\mC}(x)$.)

Combining terms in \eqref{eq:coordinate-2} gives $X^*=Y^*$ satisfying
$
X^* = \proj_{\mC}\left(X^*-\frac{1}{\rho}\nabla g(X^*)\right).
$
The optimality condition of the projection is
$
\langle X - (X - \frac{1}{\rho} \nabla g(X^*)) ,X-X^*\rangle \leq 0,\quad \forall X\in \mC\cap\mN_\epsilon(X^*)
$
which reduces to the desired condition.
\end{proof}

\begin{algorithm}
	\caption{ADMM for solving \eqref{eq:reform-simp}}
	\begin{algorithmic}[1]
		\Inputs{$\rho_0>0$, $\alpha>1$, tol $\epsilon > 0$}
		\Initialize{$Z^0, X^0;S^0,U^0$ as random matrices}
		\Outputs {$Z$, $X=Y$}
		\For{$k = 1 \hdots$}
		\State Update ${Y^{k+1}}$ the solution of
			\begin{equation}
			%\begin{array}{ll}
			\minimize{Y\in R^{n\times k}}   \|X^{k}-Y + \frac{U^k}{\rho^k}\|_F^2, 	\st  Y\in \mC.
			%\end{array}
			\end{equation}
		\State Update $X^{k+1}$ as the solution of
			\begin{equation}
			0 = \nabla g(X) + U + \rho(X-Y).
			\end{equation}
		\State Update $U$ and $\rho$ via
			\bea
			U^{k+1} = & U^{k} + \rho^k({X}^{k+1} - {Y}^{k+1}), \rho^{k+1} = \alpha\rho^{k}.
			\eea
		\If{$\|X^{k}-Y^k\|_F \leq \epsilon$}
		\State {\textbf{break}}
		\EndIf
		\EndFor
	\end{algorithmic}
\label{a:vector}
\end{algorithm}

\subsection{ADMM}
The alternating steps in minimizing the augmented Lagrangian over the primal variables
are extremely simple, compared with the previous matrix formulation.
In general we are considering $f(X)$ linear (in which case the update of $X$ involves only addition) or quadratic with strictly positive diagonal Hessian (which adds a small scaling step).
$
\mC = \{-1,1\}^n,\qquad\mC = \{x : \|x\|_2 = 1\},
$ even when $r > 1$.

\subsection{Convergence analysis}

\begin{definition}\label{def:g}
A differentiable convex function $g(X)$
is $L_g$-smooth
and $H_g$-strongly convex
over $\R^n$ if
for any $X$, $Y$,
$
g(X)-g(Y) \geq \langle \nabla f(X),X-Y\rangle - \frac{L_g}{2}\|X-Y\|_F^2
$
and
$
g(X)-g(Y) \leq \langle\nabla f(X),X-Y\rangle - \frac{H_g}{2}\|X-Y\|_F^2.
$
\end{definition}

\begin{theorem}\label{thm:simple-convergence}
Assume $g(X)$ is lower bounded over $\mC$, and is $L_g$-smooth.
Given a sequence $\{\rho^k\}$ such that
\[
\frac{\rho^k-3L_g}{2} - L_g^2\frac{\rho^{k+1}+\rho^{k}}{2(\rho^k)^2} > 0, \qquad \rho^k > L_g
\]
for all $k$, then
under Algorithm \ref{a:vector} the augmented Lagrangian  $\mL(X^{k},Y^k;U^k)$ is lower bounded and convergent, with
 $\{X^k,Y^k,U^k\}\to \{X^*,Y^*,U^*\}$ a stationary and feasible solution of \eqref{eq:reform-simp}.
\end{theorem}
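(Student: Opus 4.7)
The plan is to adapt the standard nonconvex ADMM argument (in the spirit of Wang--Yin--Zeng and Hong--Luo--Razaviyayn) to the simpler two-block setting of Algorithm \ref{a:vector}. The backbone is a sufficient-decrease inequality on the augmented Lagrangian $\mL_{\rho^k}(X^k,Y^k;U^k)$, together with a lower bound on this quantity, from which summability of successive differences, convergence, and stationarity of any limit point follow.

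First I would use the $X$-subproblem's first-order condition together with the dual update to obtain the crucial identity
\[
U^{k+1} \;=\; -\nabla g(X^{k+1}),
\]
which immediately gives, via $L_g$-smoothness of $g$,
\[
\|U^{k+1}-U^k\|_F^2 \;\le\; L_g^2\,\|X^{k+1}-X^k\|_F^2 .
\]
Then I would track the change in $\mL_{\rho^k}$ across the three substeps in the order they occur in Algorithm \ref{a:vector}. The $Y$-update gives a nonpositive change by definition of the minimizer (no convexity of the feasible set is needed since $\mC$ enters only through an indicator and the objective in $Y$ is strongly convex in $Y$). The $X$-update is a strongly convex minimization with modulus $\rho^k-L_g>0$ (guaranteed by $\rho^k>L_g$ and $L_g$-smoothness of $g$), yielding a decrease of at least $\tfrac{\rho^k-L_g}{2}\|X^{k+1}-X^k\|_F^2$. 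The dual update increases $\mL$ by exactly $\tfrac{1}{\rho^k}\|U^{k+1}-U^k\|_F^2 \le \tfrac{L_g^2}{\rho^k}\|X^{k+1}-X^k\|_F^2$, and there is an additional term of the form $\tfrac{\rho^{k+1}-\rho^k}{2}\|X^{k+1}-Y^{k+1}\|_F^2$ coming from the bookkeeping that $\rho$ itself is updated. Bounding this last piece using $\|X^{k+1}-Y^{k+1}\|=\tfrac{1}{\rho^k}\|U^{k+1}-U^k\|$ leads precisely to the stated quantity
\[
\Big(\tfrac{\rho^k-3L_g}{2} - L_g^2\,\tfrac{\rho^{k+1}+\rho^{k}}{2(\rho^k)^2}\Big)\,\|X^{k+1}-X^k\|_F^2
\]
(up to a matching $Y$-difference term from strong convexity in $Y$) as the net per-iteration decrease.

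Second, I would establish a uniform lower bound on $\mL_{\rho^k}$. Substituting $U^{k+1}=-\nabla g(X^{k+1})$ and using the descent lemma for $g$,
\[
g(X^{k+1}) + \langle U^{k+1},X^{k+1}-Y^{k+1}\rangle + \tfrac{\rho^k}{2}\|X^{k+1}-Y^{k+1}\|_F^2 \;\ge\; g(Y^{k+1}) + \tfrac{\rho^k-L_g}{2}\|X^{k+1}-Y^{k+1}\|_F^2,
\]
and the assumption that $g$ is bounded below on $\mC$ (combined with $Y^{k+1}\in\mC$) gives the needed bound. Combining the sufficient decrease with the lower bound gives a telescoping inequality, hence $\sum_k \|X^{k+1}-X^k\|_F^2 < \infty$ and $\sum_k \|Y^{k+1}-Y^k\|_F^2<\infty$, so successive differences vanish, $\|U^{k+1}-U^k\|_F\to 0$, and the primal residual $\|X^k-Y^k\|_F\to 0$ by the dual update identity. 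Convergence of $\mL_{\rho^k}(X^k,Y^k;U^k)$ is then automatic from the monotone-and-bounded argument.

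Third, I would pass to the limit: any cluster point $(X^*,Y^*,U^*)$ satisfies $X^*=Y^*\in \mC$ (by closedness of $\mC$ and vanishing residual), together with $0=\nabla g(X^*)+U^*$ and $Y^*=\proj_{\mC}(X^*+\tfrac{1}{\rho^k}U^*)$ in the limit; these are precisely the coordinatewise minimum conditions \eqref{eq:coordinate-2}, so by the preceding theorem $(X^*,Y^*,U^*)$ is a stationary and feasible point of \eqref{eq:reform-simp}. Finally, boundedness of the sequence (needed to extract the cluster point) follows from $\mL$ being bounded, $g$ being coercive on $\mC$ if $\mC$ is bounded, or from the standing assumption that the iterates lie in a compact set in the motivating applications (e.g.\ $\mC=\{-1,1\}^n$ or the unit sphere), in which case $\{Y^k\}$ is trivially bounded and $\{X^k\}$ inherits boundedness from $\|X^k-Y^k\|\to 0$.

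The main obstacle is the algebra in the sufficient-decrease step: because $\rho^k$ is not constant but updated as $\rho^{k+1}=\alpha\rho^k$, one must carefully track the cross term $\tfrac{\rho^{k+1}-\rho^k}{2}\|X^{k+1}-Y^{k+1}\|_F^2$ and combine it with the dual-ascent penalty to extract the exact coefficient in the theorem's hypothesis; once that coefficient is isolated, the rest of the argument is the standard Lyapunov/limit-point machinery.
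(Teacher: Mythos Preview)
Your proposal is essentially the paper's own proof: the identity $U^{k+1}=-\nabla g(X^{k+1})$ giving $\|U^{k+1}-U^k\|\le L_g\|X^{k+1}-X^k\|$ is exactly Lemma~\ref{lemma:Lambda}; the three-substep decomposition ($Y$-step, $X$-step, dual/$\rho$-step) and the resulting sufficient-decrease inequality is Lemma~\ref{lemma:auglag}; and the lower bound via the descent lemma and $g$ bounded on $\mC$ is Lemma~\ref{lem:simpledescentlemma}. The only substantive difference is that your direct strong-convexity argument for the $X$-subproblem yields the sharper constant $\tfrac{\rho^k-L_g}{2}$, whereas the paper's add-and-subtract manipulation produces $\tfrac{\rho^k-3L_g}{2}$; so your claim that the computation ``leads precisely to'' the theorem's stated coefficient is slightly off---you would actually obtain a larger decrease, under which the theorem's hypothesis is still sufficient (just not tight).
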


\begin{proof}
See section \ref{sec:vectorconvergence} in the appendix.
\end{proof}

\paragraph{Remark}: Convergence is  guaranteed under a constant penalty coefficient
$
\rho_k \equiv \rho^0  \geq \frac{3+\sqrt{17}}{2} L_g,\qquad \alpha = 1.
$
% can also guarantee the convergence. I
However, in implementation, we find empirically that increasing $\{\rho^k\}$ from a relatively small $\rho^0$  can encourage convergence to more useful global minima.

\begin{theorem}\label{thm:LagDecent-linear}
If $g(X)$ is $H_g$-strongly convex and  $\rho^k = \rho$ constant, with
$
\frac{\rho+H_g}{2}\geq \frac{L_g^2}{\rho},\quad \rho > L_g
$
then
under Algorithm \ref{a:vector} the augmented Lagrangian  $\mL(X^{k},Y^k;U^k)$ converges to \\ $\mL(X^*,Y^*,U^*)$ at a linear rate.
\end{theorem}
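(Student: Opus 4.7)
The plan is to upgrade the sufficient-decrease estimate behind Theorem~\ref{thm:simple-convergence} into an error-bound of the form ``progress $\geq$ constant $\times$ gap,'' so that one sweep of Algorithm~\ref{a:vector} shrinks $\mL^k-\mL^*$ by a uniform factor.

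First, I would reuse the descent argument from the proof of Theorem~\ref{thm:simple-convergence}. Since the $X$-update is unconstrained, its first-order condition combined with the dual update yields the identity $\nabla g(X^{k+1})=-U^{k+1}$, so $L_g$-smoothness gives $\|U^{k+1}-U^k\|_F\leq L_g\|X^{k+1}-X^k\|_F$. Decomposing $\mL^{k+1}-\mL^k$ across the $Y$-, $X$-, and $U$-updates, using $H_g$-strong convexity on the $X$-step, and noting that the $Y$-step contributes non-positively (it is a minimizer on $\mC$), one obtains the strengthened descent
\[
\mL^k-\mL^{k+1}\;\geq\;c_1\|X^{k+1}-X^k\|_F^2,\qquad c_1:=\tfrac{\rho+H_g}{2}-\tfrac{L_g^2}{\rho}>0,
\]
with positivity guaranteed precisely by the standing hypothesis on $\rho$.

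Second, I would prove a matching upper bound $\mL^{k+1}-\mL^*\leq c_2\|X^{k+1}-X^k\|_F^2$. At a limit point $(X^*,Y^*,U^*)$ the indicator term vanishes, $X^*=Y^*\in\mC$, and $\nabla g(X^*)=-U^*$, so
\[
\mL^{k+1}-\mL^*\;=\;\bigl(g(X^{k+1})-g(X^*)\bigr)+\langle U^{k+1},X^{k+1}-Y^{k+1}\rangle+\tfrac{\rho}{2}\|X^{k+1}-Y^{k+1}\|_F^2.
\]
The last two terms reduce via $U^{k+1}-U^k=\rho(X^{k+1}-Y^{k+1})$ to a multiple of $\|U^{k+1}-U^k\|_F^2\leq L_g^2\|X^{k+1}-X^k\|_F^2$. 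For the first term I would use smoothness to bound $g(X^{k+1})-g(X^*)\leq -\langle U^{k+1},X^{k+1}-X^*\rangle$ and then trade primal error against dual error via the two-sided inequalities $H_g\|X^{k+1}-X^*\|_F\leq\|U^{k+1}-U^*\|_F\leq L_g\|X^{k+1}-X^*\|_F$, which follow from strong convexity and smoothness of $g$ together with $U^{k+1}=-\nabla g(X^{k+1})$. Writing $U^*-U^{k+1}$ as the telescoping sum of future $U$-increments (already controlled by $\|X^{j+1}-X^j\|_F$) closes the loop and expresses the bound purely in terms of $\|X^{k+1}-X^k\|_F^2$.

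Combining the two bounds gives $\mL^{k+1}-\mL^*\leq\tfrac{c_2}{c_1}(\mL^k-\mL^{k+1})$, which rearranges to $\mL^{k+1}-\mL^*\leq\tfrac{c_2}{c_1+c_2}(\mL^k-\mL^*)$, the claimed linear rate. I expect the main obstacle to lie in the second step: because the $Y$-subproblem projects onto a nonconvex set $\mC$, the classical convex-ADMM contraction of the dual residual is unavailable, and closing the error bound requires leaning squarely on the identity $U^{k+1}=-\nabla g(X^{k+1})$---specific to an unconstrained and strongly convex $X$-block---so that dual errors can be swapped for primal errors without ever invoking any variational property of $\mC$ itself.
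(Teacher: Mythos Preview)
Your first step (sufficient decrease $\mL^k-\mL^{k+1}\geq c_1\|X^{k+1}-X^k\|_F^2$) is fine and matches the paper. The second step, however, has a real gap. The telescoping identity $U^*-U^{k+1}=\sum_{j\geq k+1}(U^{j+1}-U^j)$ bounds $\|U^{k+1}-U^*\|$ by $L_g\sum_{j\geq k+1}\|X^{j+1}-X^j\|$, an infinite sum over \emph{future} iterates. You cannot collapse this to a multiple of the single current increment $\|X^{k+1}-X^k\|$ without already assuming geometric decay of the step sizes---which is precisely the linear rate you are trying to prove. So the argument is circular, and the target inequality $\mL^{k+1}-\mL^*\leq c_2\|X^{k+1}-X^k\|_F^2$ is not established. (There is also a smaller slip: the cross term $\langle U^{k+1},X^{k+1}-Y^{k+1}\rangle=\tfrac{1}{\rho}\langle U^{k+1},U^{k+1}-U^k\rangle$ is linear, not quadratic, in $U^{k+1}-U^k$, so it is not ``a multiple of $\|U^{k+1}-U^k\|_F^2$'' as you assert.)

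The paper avoids this circularity by working with the stacked variable $x=(X,Y,U)$ throughout and never trying to reduce the gap to a single-block step size. For the descent lemma it uses strong convexity in the \emph{reverse} direction, $\|U^{k+1}-U^k\|_F\geq H_g\|X^{k+1}-X^k\|_F$, so that the per-sweep decrease controls $\|x^{k+1}-x^k\|^2$ in all three blocks simultaneously. For the upper bound it shows that $\mL$, restricted to $Y\in\mC$, has a Lipschitz gradient in the stacked variable, and then invokes the quadratic upper bound at the stationary point $x^*$ to get $\mL^{k+1}-\mL^*\leq c_4\|x^{k+1}-x^*\|^2$ directly---with distance to the limit, not the step size, on the right. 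The contraction is then obtained by combining these two estimates on the full stacked vector.
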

\begin{proof}
See section \ref{sec:vectorconvergence-linear} in the appendix.
\end{proof}

\section{Numerical experiments}
In this section, we give numerical results on the proposed methods for community detection, MAX-CUT, image segmentation, and symmetric matrix factorization.
In each application, we evaluate and compare these four methods.
%\begin{enumerate}
	i) SD: the solution to a  semidefinite relaxation of \eqref{eq:main} (SDR), where $\mC = \R^{n,r}$.
	The binary vector factor $x$ where $xx^T = Z$ is  is recovered   using a  Goemans-Williamson style rounding. \cite{goemans1995improved} technique. This is our baseline method, and is described in more detail below.
	ii) MR1: Algorithm \ref{a:matrix} with $r = 1$.
	iii) MRR: Algorithm \ref{a:matrix} with $r = \ceil{\sqrt{2n}}$, then rounded to a binary vector using  a nonsymmetric version of the Goemans-Williamson style rounding \cite{goemans1995improved} technique.
	Both MR1 and MRR have the following stopping criterion $\max\{P^{(k)},D^{(k)}\} \leq  \epsilon$ for some tolerance parameter $\epsilon > 0$, where:
$	
%\begin{eqnarray*}
	P^{(k)} := \left\{ \frac{\|Z^k-Z^{k-1}\|_2}{\|Z^k\|_2},\frac{\|X^k-X^{k-1}\|_2}{\|X^k\|_2},\frac{\|Y^k-Y^{k-1}\|_2}{\|Y^k\|_2}\right\},
	D^{(k)} := \max\left\{\frac{\|Z^{(k)} - X^{(k)}(Y^{(k)})^T\|_2}{\|Z^k\|_2},\frac{\|X^{(k)}-Y^{(k)}\|_2}{\|X^{(k)}\|_2}\right\}.
	%\end{eqnarray*}
$
	(Here $D^{(k)}$ is also proportional to the difference in dual iterates, and thus $P^{(k)}$ and $D^{(k)}$ can be interpreted as primal and dual residuals, respectively.)

	iv) V: Algorithm \ref{a:vector}, with stopping criterion $\max\{P^{(k)},D^{(k)}\} \leq  \epsilon$ where	
$
	P^{(k)} := \left\{\frac{\|x^k-x^{k-1}\|_2}{\|x^k\|_2}, \frac{\|y^k-y^{k-1}\|_2}{\|y^k\|_2}\right\}, \qquad
	D^{(k)} := \frac{\|x^k-y^k\|_2}{\|x^k\|_2}.
$
The same primal and dual residual interpretation can be used here as well.
%\end{enumerate}
In all cases, we use the following scheme for $\rho$:
$
 \rho^k = \min\{\rho_{\max},\rho^{k-1}*\gamma\}.
$
where $\rho_{\max}\approx 10,000$ and $\gamma\approx 1.05$ (slightly larger than 1).

\paragraph{Solving the baseline (SDR)}
As a baseline, we compare against the solution of the semidefinite relaxed problem without factor variables $X$ (e.g. $\mC = \R^{n,n}$):
\begin{equation}
%\begin{array}{ll}
\minimize{Z}  f(Z), \st  \mA(Z) = b, Z \succeq 0.
%\end{array}
\label{eq:sdr}
\end{equation}	
For a fair comparison, we use a first-order splitting method very similar to ADMM, which is the Dougals-Rachford Splitting (DRS) method (\cite{lions1979splitting,douglas1956numerical}, see also \cite{spingarn1985applications,eckstein1992douglas}). We introduce dummy variables and solve the reformulation of \eqref{eq:sdr}
\[
%\begin{array}{ll}
\minimize{Z_1,Z_2,Z_3}  g_1(Z_1)+g_2(Z_2) + g_3(Z_3), \st  Z_1+Z_2+Z_3\\
%\end{array}
\]
where
$
g_1(Z_1) = \tr(CZ_1),  g_2(Z_2) =
\begin{cases}
0, &\mA(Z_2) = b\\
+\infty, & \text{ else, and}
\end{cases}
 g_3(Z_3) =
\begin{cases}
0, & Z_3\succeq 0 \\
+\infty, & \text{ else.}
\end{cases}
$
An application of the DRS on this reformulation (see also Alg. 3.1 in \cite{combettes2008proximal}) is then the following iteration scheme: for $i = 1,2,3$,
\begin{eqnarray*}
X_i^{(k+1)} &=& \prox_{t g_i}(Z_i), \hat Y_i = 2X_i^{(k+1)} -Z_i^{(k)},\\
Y^{(k+1)} &=& \frac{1}{3}( X_1^{(k+1)} +  X_2^{(k+1)} +  X_3^{(k+1)}),\\
Z_i^{(k+1)}  &=& Z_i^{(k)}  + \rho(Y^{(k+1)} -X_i^{(k+1)} )
\end{eqnarray*}
and for a convex function $f$
$
z = \prox_{tf}(u) \iff \argmin{z} \; f(z) + \frac{1}{2t} \|z-u\|_2^2.
$

\paragraph{Rounding} Following the technique in \cite{goemans1995improved}, we can estimate $x$ from a rank $r$ matrix $X \approx xx^T$ by randomly projecting the main eigenspaces on the unit sphere. The exact procedure is as follows.
%\begin{itemize}
i) For the symmetric SDP solution $X$, we first do an eigenvalue decomposition $X = Q\Lambda Q^T$ and form  a factor $F = Q\Lambda^{1/2}$ where the diagonal elements of $\Lambda$ are in decreasing magnitude order. Then we scan $k = 1,\hdots, n$ and find $x_{k, t} = \sign(F_k z_t)$ for trials $t = 1,\hdots  10$. Here, $F_k$ contain the first $k$ columns of $F$, and each element of $z_t\in \R^k$ is drawn i.i.d from a normal Gaussian distribution.
We report the values for $x_r = \argmin{x_{k,t}}\;\{x_r^TCx_r\}$.
ii) For the MRR method, we repeat the procedure using a factor $F = U\Sigma^{1/2}$ where $X=U\Sigma V^T$ is the SVD of $X$.
iii) For MR1 and V, we simply take $x_r = \sign(x)$  as the binary solution.
%\end{itemize}

\paragraph{Computer information}
The following simulations are performed on a standard desktop computer with an
Intel Xeon processor (3.6 GHz), and 32 GB of RAM . It is running with Matlab R2017a.

\paragraph{MAX-CUT}
Table \ref{t:maxcut-perf} gives the best MAX-CUT values using best-of-random-guesses and our approaches over four examples from the 7th DIMACS Implementation Challenge in 2002.\footnote{See \texttt{http://dimacs.rutgers.edu/Workshops/7thchallenge/.} Problems downloaded from \texttt{http://www.optsicom.es/maxcut/}}
Often, we find the quality of our recovered  solutions   close to the best-known solutions, and often achieve similar suboptimality as the rounded SDR solutions. However, the runtime comparison (Fig. \ref{f:maxcut-runtime}) suggests the ADMM methods (especially MR1 and SDR) are much more computationally efficient and scalable.
All experiments are performed with $\epsilon = 1\times 10^{-3}$.

	\begin{table}
		\begin{center}
			{\footnotesize
            \begin{tabular}{ c|cc|c|cccc  }
\hline\hline
		database		&		n		&		sparsity		&		BK		&	V		&		MR1	& MRR & SDR	\\
\hline
g3-8	&	512	&	0.012	&	{\scriptsize 41684814}	&	{\scriptsize 34105231}	&	{\scriptsize 36780180}	&	{\scriptsize 35943350}	&	{\scriptsize 33424095}\\
g3-15	&	3375	&	0.018	&	{\scriptsize 281029888}	&	{\scriptsize 235893612}	&	{\scriptsize 255681256}	&	{\scriptsize 241740931}	&	{\scriptsize 212669181}\\
pm3-8-50	&	512	&	0.012	&	454	&	394	&	346	&	378	&	416\\
pm3-15-50	&	3375	&	0.018	&	2964	&	2594	&	1966	&	2140	&	2616\\
G1	&	800	&	0.0599	&	11624	&	10938	&	11047	&	11321	&	11360\\
G2	&	800	&	0.0599	&	11620	&	10834	&	11082	&	11144	&	11343\\
G3	&	800	&	0.0599	&	11622	&	10858	&	10894	&	11174	&	11367\\
G4	&	800	&	0.0599	&	11646	&	10849	&	10760	&	11192	&	11429\\
G5	&	800	&	0.0599	&	11631	&	10796	&	10783	&	11352	&	11394\\
G6	&	800	&	0.0599	&	2178	&	1853	&	1820	&	1949	&	1941\\
G7	&	800	&	0.0599	&	2003	&	1694	&	1644	&	1705	&	1774\\
G8	&	800	&	0.0599	&	2003	&	1688	&	1641	&	1728	&	1766\\
G9	&	800	&	0.0599	&	2048	&	1771	&	1681	&	1807	&	1830\\
G10	&	800	&	0.0599	&	1994	&	1662	&	1641	&	1737	&	1732\\
G11	&	800	&	0.005	&	564	&	496	&	460	&	480	&	506\\
G12	&	800	&	0.005	&	556	&	486	&	448	&	480	&	512\\
G13	&	800	&	0.005	&	580	&	516	&	476	&	498	&	528\\
G14	&	800	&	0.0147	&	3060	&	2715	&	2768	&	2861	&	2901\\
G15	&	800	&	0.0146	&	3049	&	2625	&	2810	&	2803	&	2884\\
G16	&	800	&	0.0146	&	3045	&	2667	&	2736	&	2862	&	2910\\
G17	&	800	&	0.0146	&	3043	&	2638	&	2789	&	2840	&	2920\\
G18	&	800	&	0.0147	&	988	&	798	&	768	&	841	&	858\\
G19	&	800	&	0.0146	&	903	&	700	&	641	&	694	&	780\\
G20	&	800	&	0.0146	&	941	&	723	&	691	&	766	&	788\\
G21	&	800	&	0.0146	&	931	&	696	&	713	&	810	&	794\\
G22	&	2000	&	0.01	&	13346	&	12461	&	12548	&	12751	&	12926\\
G23	&	2000	&	0.01	&	13317	&	12540	&	12528	&	12853	&	12889\\
G24	&	2000	&	0.01	&	13314	&	12540	&	12447	&	12723	&	12904\\
G25	&	2000	&	0.01	&	13326	&	12447	&	12558	&	12733	&	12874\\
G26	&	2000	&	0.01	&	13314	&	12445	&	12475	&	12718	&	12847\\
G27	&	2000	&	0.01	&	3318	&	2824	&	2508	&	2807	&	2909\\
G28	&	2000	&	0.01	&	3285	&	2753	&	2518	&	2796	&	2845\\
G29	&	2000	&	0.01	&	3389	&	2864	&	2628	&	2901	&	2896\\
G30	&	2000	&	0.01	&	3403	&	2887	&	2639	&	2937	&	2971\\
G31	&	2000	&	0.01	&	3288	&	2833	&	2518	&	2902	&	2825\\
G32	&	2000	&	0.002	&	1398	&	1220	&	1066	&	1204	&	1254\\
G33	&	2000	&	0.002	&	1376	&	1202	&	1054	&	1166	&	1250\\
G34	&	2000	&	0.002	&	1372	&	1208	&	1096	&	1170	&	1222\\
G35	&	2000	&	0.0059	&	7670	&	6605	&	6914	&	6764	&	7209\\
G36	&	2000	&	0.0059	&	7660	&	6564	&	6943	&	6598	&	7228\\
G37	&	2000	&	0.0059	&	7666	&	6478	&	6839	&	6789	&	7183\\
G38	&	2000	&	0.0059	&	7681	&	6486	&	6759	&	6768	&	7212\\
G39	&	2000	&	0.0059	&	2395	&	1616	&	1697	&	1840	&	1997\\
G40	&	2000	&	0.0059	&	2387	&	1617	&	1438	&	1921	&	1890\\
G41	&	2000	&	0.0059	&	2398	&	1606	&	1656	&	1778	&	1899\\
G42	&	2000	&	0.0059	&	2469	&	1707	&	1756	&	1862	&	1971\\
G43	&	1000	&	0.02	&	6659	&	6222	&	6236	&	6398	&	6475\\
G44	&	1000	&	0.02	&	6648	&	6275	&	6192	&	6447	&	6458\\
G45	&	1000	&	0.02	&	6652	&	6243	&	6255	&	6407	&	6454\\
G46	&	1000	&	0.02	&	6645	&	6217	&	6233	&	6398	&	6407\\
G47	&	1000	&	0.02	&	6656	&	6221	&	6266	&	6433	&	6454\\
G48	&	3000	&	0.0013	&	6000	&	5882	&	5006	&	5402	&	6000\\
G49	&	3000	&	0.0013	&	6000	&	5844	&	5038	&	5362	&	6000\\
G50	&	3000	&	0.0013	&	5880	&	5814	&	4994	&	5410	&	5880\\
G51	&	1000	&	0.0118	&	3846	&	3317	&	3446	&	3524	&	3642\\
G52	&	1000	&	0.0118	&	3849	&	3360	&	3471	&	3499	&	3662\\
G53	&	1000	&	0.0118	&	3846	&	3323	&	3510	&	3516	&	3660\\
G54	&	1000	&	0.0118	&	3846	&	3306	&	3428	&	3509	&	3651\\
\hline\hline
			\end{tabular}}
			\caption{MAX-CUT values for graphs from the 7th DIMACS Challenge. MRR = matrix formulation, $r = \ceil{\sqrt{2n}}$. SDR = SDP relaxation + rounding technique.}			
			\label{t:maxcut-perf}
		\end{center}
\end{table}

\begin{figure}
\begin{center}
	\includegraphics[scale=0.75]{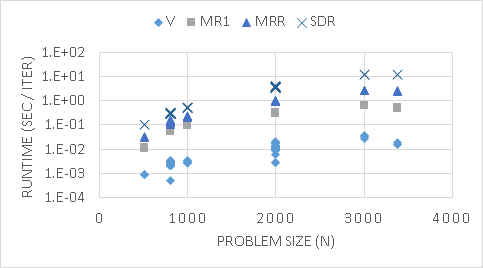}
	\includegraphics[scale=0.75]{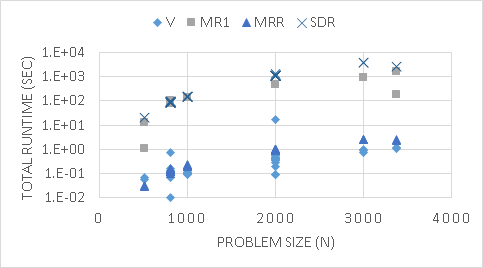}
	\end{center}
	\caption{Time comparisons for DIMACS problems. Left: average runtime per iteration. Right: total runtime. We observe that both V and MRR converge in relatively few number of iterations, with MR1 taking slightly longer. However, as previously observed with splitting methods, the convergence rate is sensitive to the parameter choices $\rho^{(t)}$. For best performance, we start with a relatively small initial penalty coefficient and increase it with the iteration until the upper bound is achieved.
	%Also, the eigenvalue information of the objective Hessian can be used when the problem size is relative small.
	}
	\label{f:maxcut-runtime}
\end{figure}

\paragraph{Image segmentation}
Both community detection and MAX-CUT can be used in image segmentation, where each pixel is a node and the similarity between pixels form the weight of the edges. Generally, solving \eqref{eq:main} for this application is not preferred, since the number of pixels in even a moderately sized image is extremely large. However, because of our fast methods, we successfully performed image segmentation on several thumbnail-sized images, in figure \ref{f:imageseg}.

The $C$ matrix is composed as follows. For each pixel, we compose two feature vectors: $f_c^{ij}$ containing the RGB values and $f_p^{ij}$ containing the pixel location. Scaling $f_c^{ij}$ by some weight $c$, we form the concatenated feature vector $f^{ij} = [f_c^{ij}, cf_p^{ij}]$, and form the weighted adjacency matrix as the squared distance matrix between each feature vector $A_{(ij), (kl)} = \|f^{ij} - f^{kl}\|_2^2$.
For MAX-CUT, we again form $C = A - \Diag(A\mb 1)$ as before. For community detection, since we do not have exact $p$ and $q$ values, we use an approximation as $C = a\mb1 \mb1^T-A$ where $a = \frac{1}{n^2}\mb1^TA\mb1$ the mean value of $A$. Sweeping $C$ and $\rho_0$, we give the best qualitative result in figure \ref{f:imageseg}.

\begin{figure}
\begin{center}
	\includegraphics[width=3in]{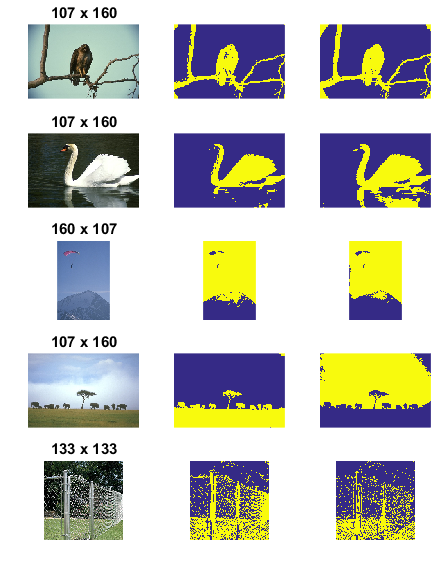}
\end{center}
	\caption{Image segmentation. The center and right columns are the best MAX-CUT and community detection results, respectively.}
	\label{f:imageseg}
\end{figure}

%\subsection{Sensor node localization}
%
%\[
%\begin{array}{ll}
%\maximize{X} & \tr(X)\\
%\st & \rank(X) = r\\
%& P_{\Omega}(X) = \hat X
%\end{array}
%\]
%
%
%
%\subsection{Robust PCA}
%Use Yahoo! Stock data to build a partial covariance matrix, then find a low rank completionn by solving
%\[
%\begin{array}{ll}
%\minimize{X} & \|P_{\Omega}(X-\hat X)\|_2^2\\
%\st & \rank(X) = r
%\end{array}
%\]

\paragraph{Symmetric factorization with partial observations}
Recall the factorization with partial observations formulation as follows
\begin{equation}
%\begin{array}{ll}
\minimize{Z\in \symm^n,X\in \R^{n\times r}} \displaystyle\sum_{i,j\in \Omega} (Z_{ij}-C_{ij})^2,\st  Z = XX^T,\quad X\geq 0.
%\end{array}
\label{eq:nonneg-SDP2}
\end{equation}
Note that here we generalize the aforementioned formulation with $r = 5$. In this setting, while the strongly convex $Y-$update in the proposed algorithm can no longer be solved in closed form, projected gradient descent is applied to deal with it. The relative error defined as $\|(Z^* - C)_{\Omega}\|/\|C_{\Omega}\|$ and CPU time with varying problem size and sparsity are demonstrated in Table \ref{t:nmf}.
% and Figure \ref{f:runtime-nonnegfact}.

%\red{need to compare either with GD on quartic or just use $r > 1$.}

%\begin{figure}
%\begin{center}
%\includegraphics[scale=0.75]{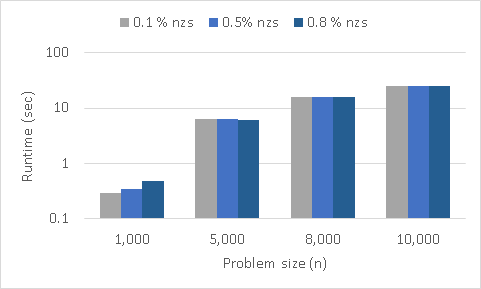}
%\end{center}
%\caption{Runtime of several nonnegative factorization experiments.}
%\label{f:runtime-nonnegfact}
%\end{figure}

\begin{table}
	\begin{center}
	\begingroup
	
\setlength{\tabcolsep}{2pt} % Default value: 6pt
\renewcommand{\arraystretch}{1.1} % Default value: 1
		{\footnotesize
		%\begin{tabular}{ c|cccc cccc cc  }
        \begin{tabular}{ c|ccc|ccc|ccc|ccc   }
        \hline\hline
        $n$&\multicolumn{3}{|c|}{1,000}&\multicolumn{3}{|c|}{3,000}&\multicolumn{3}{|c|}{5,000} &\multicolumn{3}{|c}{8,000}\\
        \hline
        ${|\Omega|}/{n^2}$&0.1&0.5&0.8&0.1&0.5&0.8&0.1&0.5&0.8&0.1&0.5&0.8\\
        \hline
        CPU time/s&9.74&13.53&13.97&61.15&78.99&64.76&117.54&85.24&131.64&212.26&220.42&337.74\\
        \hline
        $\frac{\|(Z^*-C)_{\Omega}\|}{\|C_{\Omega}\|}$&0.86&0.85&0.86&0.89&0.89&0.89&0.89&0.88&0.87&0.88&0.90&0.89\\\hline
        STD &0.043&0.020&0.021&0.010&0.006&0.008&0.008&0.012&0.018&0.004&0.008&0.008\\
        \hline\hline
		\end{tabular}}	
		\endgroup			
		\caption{Result for nonnegative factorization with partial observations from linearized ADMM (5 trials). STD = standard deviation.}		
		\label{t:nmf}
	\end{center}
\end{table}
%\begin{figure}[!h]
%	\vspace{-5cm}
%	\hspace{1.5cm}
%	\includegraphics[scale=0.65]{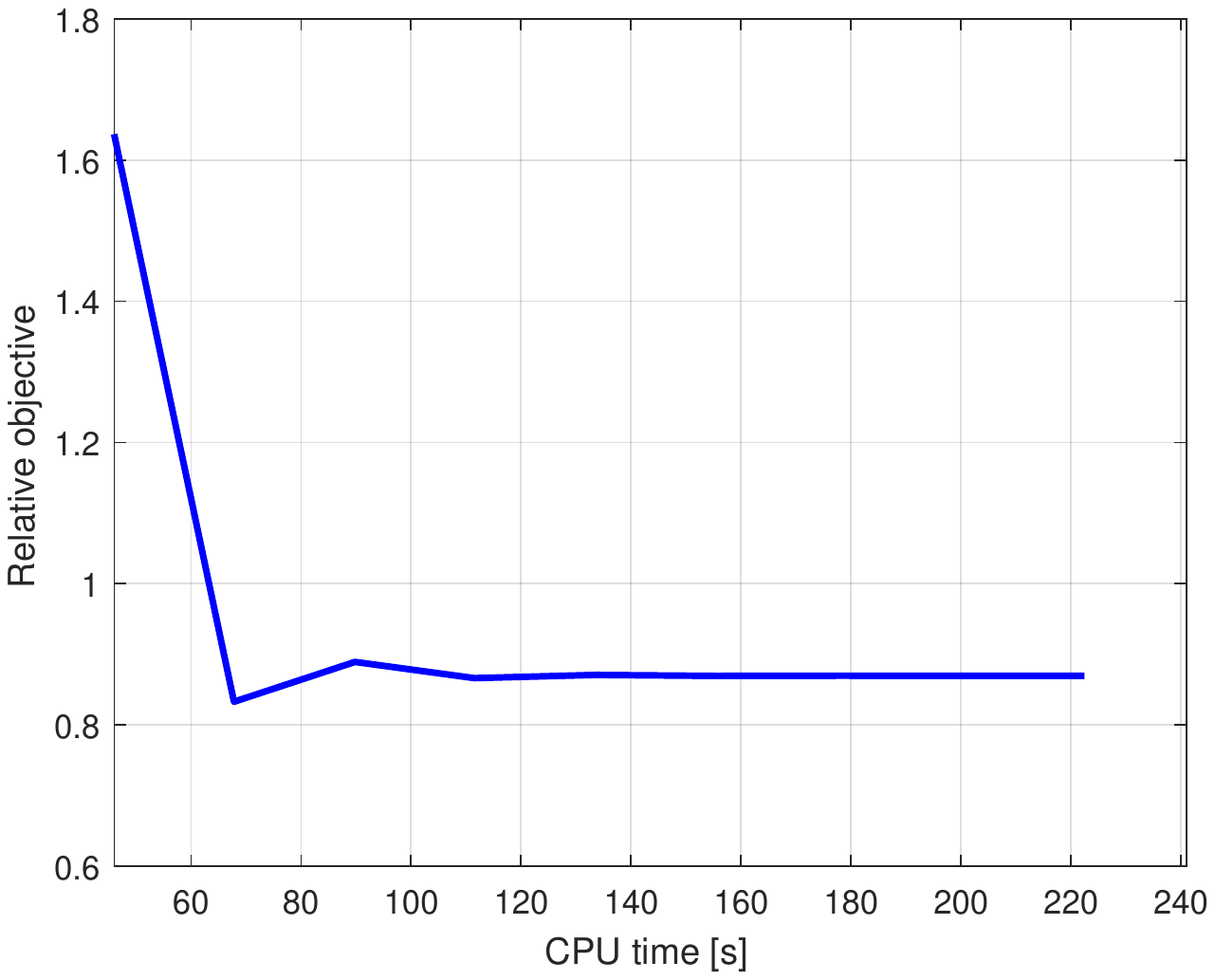}
%	\vspace{-6cm}
%	\caption{Relative objective versus CPU time.}
%	\label{f:relative_obj}
%\end{figure}

\section{Conclusion}

We present two methods for solving  quadratic combinatorial problems using ADMM on two reformulations. Though the problem has a nonconvex constraint, we give convergence results to KKT solutions under mild conditions. From this, we give empirical solutions to  several graph-based combinatorial problems, specifically MAX-CUT and community detection; both can can be used in additional downstream applications, like image segmentation.

\appendix

\section{Derivation of $X$, $Z$ update}

In linearized case, consider $G = \nabla f(Z^{k-1}) = G_{\Omega}$.
Then the optimality conditions of xxx are
\begin{eqnarray*}
G - \mA^*(\nu) + S + \rho(Z-(XY^T)_\Omega) &=& 0\\
U -S Y + \rho((XY^T)_\Omega Y-ZY) + \rho(X-Y) &=& 0\\
\mA(Z) &=& b.
\end{eqnarray*}
Using
$
%\begin{eqnarray*}
D =  \rho^{-1}(S Y -U) + Y , B = -\rho^{-1}(G-\mA^*(\nu) + S),
%\end{eqnarray*}
$
we get
\begin{eqnarray*}
-B + Z-(XY^T)_\Omega &=& 0\\
-D + (XY^T)_\Omega Y-ZY + X &=& 0\\
\mA(Z) &=& b.
\end{eqnarray*}
Substitute for $Z$:
$%\begin{eqnarray*}
Z = (XY^T)_\Omega + B
\Rightarrow  D +((XY^T)_\Omega + B)Y  = (XY^T)_\Omega Y+X
\Rightarrow  D + BY  = X.
$%\end{eqnarray*}
 Since we assume the diagonal is in $\Omega$, $\mA(X_\Omega) = \mA(X)$, so to solve for $\nu$:
\[
\mA((XY^T)_\Omega + B) = \mA(XY^T + B) =\mA((D + BY )Y^T + B) = b
\]
and therefore
$
\mA(B(YY^T+I)) = b - \mA(DY^T).
$
Insert $B$  and simplify
\begin{eqnarray*}
b - \mA(DY^T)  &=&  \mA((-\rho^{-1}(G-\mA^*(\nu) + S))(YY^T+I))  \\
&=&   -\rho^{-1} \mA((G-\mA^*(\nu) + S)(YY^T+I))
\end{eqnarray*}
and thus
\[
b - \mA(DY^T)  + \rho^{-1} \mA((G+ S)(YY^T+I))=    \rho^{-1} \mA(\mA^*(\nu) (YY^T+I))   = \rho^{-1}H\nu
\]
where $H$ is an $m\times m$ matrix with
$
H_{ij} = \langle A_i, A_j(YY^T + I)\rangle.
$
Thus this system reduces to
$
\nu = H^{-1} \left(b - \mA(DY^T)  + \rho^{-1} \mA((G+ S)(YY^T+I))\right).
$

\paragraph{Implicit inverse of $H$}
When $\mA = \diag$, \eqref{eq:solvenu} reduces to $n$ scalar element-wise computations
$
\nu_i = \frac{ \rho(b-(DY^T)_{ii}) + ((G+S)(I + YY^T))_{ii}}{(YY^T)_{ii}+1}.
$
When $\mA = \tr$,\\
$
\nu = \frac{\rho(b-\tr(DY^T)+\tr((G+S)(I + YY^T))}{\tr(YY^T) + 1}.
$
Note that in both cases, the computation for $\nu$ can be done without ever forming an $n\times n$ matrix. For example, for $\mA = \diag$,
$
DY^T_{ii} =
\rho^{-1}(S YY^T)_{ii} -\rho^{-1}(UY^T)_{ii} +( YY^T)_{ii}
$
Recall that for any two matrices $A$,$B\in \R^{n\times r}$, $(AB^T)_{ii} = A_i^T B_i$ where $A_i$, $B_i$ are the $i$th rows of $A$ and $B$; thus an efficient way of computing $\nu$ is
%\begin{enumerate}
i) Compute more skinny matrices $F_1 = SY$, $F_2 = GY$

ii) Compute the element-wise products $G_1 = F_1\circ Y$, $G_2 = U\circ Y$, $G_3 = F_2\circ Y$, and $G_4 = Y\circ Y$, where $(A\circ B)_{ij} = A_{ij} B_{ij}$ (element-wise multiplication).
iii) Compute the row sums $g_i = G_i\mb 1$, $i = 1,...,4$.
iv) Compute the ``numerator vector" $h_1 = \rho(b-(\rho^{-1}(g_1 -g_2) +g_4) + \diag(G)+\diag(S) + g_3+g_1$ and ``denominator vector" $h_2 = g_4+1$.
v) Then
$
\nu_i = \frac{(h_1)_i}{(h_2)_{i}}.
$

%\end{enumerate}
A similar procedure can be done for $\mA = \tr$, to keep memory requirements low.

\section{Convergence analysis for matrix form}\label{sec:matrixproof}

To simplify notation, we first collect the primal and dual variables \\ $P^k = (Z,X,Y)^k$ and $D^k = (\Lambda_1,\Lambda_2)^k$.
We define the augmented Lagrangian at iteration $k$ as
\begin{eqnarray}\label{eq:AugLag}
\mL^k:&=&\mL(P^k;D^k;\rho^k) = f(Z^{k})  + \delta_{\mC}(Y)\\
&+& \langle U,X-Y\rangle  +\langle S,Z - XY^T\rangle + \frac{\rho}{2}\|X-Y\|_F^2 + \frac{\rho}{2}\|Z - XY^T\|_F^2\nonumber
\end{eqnarray}
and its linearization at iteration $k$ as
\begin{eqnarray}\label{eq:AugLagLin}
\bar \mL^k &:=& \bar \mL(P^k;D^k;\rho^k; \bar f^k) = \bar f^k + \delta_{\mC}(Y) \\
&+&  \langle U,X-Y\rangle
 +\langle S,Z - XY^T\rangle + \frac{\rho}{2}\|X-Y\|_F^2 + \frac{\rho}{2}\|Z - XY^T\|_F^2  \nonumber
\end{eqnarray}
Here, $\bar f^k :=   f(Z^{k-1}) + \langle G^{k-1}, Z-Z^{k-1} \rangle$ such that $f^k$ is the linearization of $f$ at $Z^{k-1}$.

\begin{lemma}\label{lem:hessianboundY}
	$
	\nabla^2 \mL_Y = \nabla^2 \bar \mL_Y \succeq \rho^k I.
	$
\end{lemma}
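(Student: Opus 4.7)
The plan is simply to identify which terms in $\mL^k$ and $\bar\mL^k$ depend on $Y$, differentiate twice, and read off a positive semidefinite lower bound. The two functionals $\mL^k$ and $\bar\mL^k$ differ only in the $Z$-dependent term $f(Z)$ vs.\ its linearization $\bar f^k(Z)$; since neither of these involves $Y$, the equality $\nabla^2\mL_Y=\nabla^2\bar\mL_Y$ is immediate, and we only need to analyze one of them. We also set aside the indicator $\delta_{\mC}(Y)$, which is nonsmooth; the statement is about the smooth part of the Lagrangian (the Hessian in the interior of the domain, or equivalently on directions where the indicator is finite), consistent with how $\nabla^2\mL_Y$ is used later in the convergence analysis.

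First I would drop all terms in \eqref{eq:AugLag} that are constant in $Y$, leaving the $Y$-dependent part
\begin{equation*}
\Phi(Y) \;=\; -\langle U,Y\rangle \;-\;\langle S, XY^T\rangle \;+\;\tfrac{\rho^k}{2}\|X-Y\|_F^2 \;+\;\tfrac{\rho^k}{2}\|Z-XY^T\|_F^2.
\end{equation*}
The first two summands are linear in $Y$ and therefore contribute nothing to the Hessian. For the third, a direct differentiation gives $\nabla^2_Y\bigl(\tfrac{\rho^k}{2}\|X-Y\|_F^2\bigr)=\rho^k\, I$ (as an operator on the vectorization of $\Delta Y$). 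For the fourth, writing $\psi(Y)=\tfrac{1}{2}\|Z-XY^T\|_F^2$ one has $\nabla_Y\psi(Y)=(YX^T-Z^T)X=YX^TX-Z^TX$, so the Hessian acts on a direction $\Delta Y\in\R^{n\times r}$ by $\Delta Y\mapsto\Delta Y\,(X^TX)$, equivalently $I\otimes X^TX$ in the standard vectorization.

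Putting these together,
\begin{equation*}
\nabla^2_Y \mL^k \;=\; \nabla^2_Y\bar\mL^k \;=\; \rho^k\bigl(I \;+\; I\otimes X^TX\bigr).
\end{equation*}
Since $X^TX\succeq 0$, the added block $\rho^k\,(I\otimes X^TX)$ is PSD, and dropping it yields the desired lower bound $\nabla^2_Y\mL^k\succeq\rho^k I$. There is no real obstacle here beyond the bookkeeping of keeping track of which terms depend on $Y$ and checking the tensor-product form of the Hessian of a Frobenius-norm quadratic; in particular, no assumption on $X$ (beyond its existence at the current iterate) is needed, since the extra term $\rho^k\,I\otimes X^TX$ is used only to establish positive semidefiniteness, not its magnitude.
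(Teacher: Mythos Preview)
Your proof is correct and essentially identical to the paper's. The paper simply writes $\nabla^2\mL_Y=\rho^k(M+I)\succeq\rho^k I$ with $M=\blkdiag(X^TX,\dots,X^TX)$, which is exactly your $I\otimes X^TX$; you have just spelled out the intermediate steps (discarding the $Y$-independent terms, noting the linear terms vanish, and computing the Hessian of the Frobenius quadratic) that the paper leaves implicit.
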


\begin{proof}
	
		Given the definition of $\mL$, we can see that the Hessian \\
		$
		\nabla^2 \mL_Y = \rho^k\left(M+   I\right) \succeq \rho^k I
		$
		where
		$
		M = \blkdiag(X^TX,X^TX,...)\succeq 0.
		$
\end{proof}

\begin{lemma}\label{lem:hessianboundXZ}
		$
	\nabla^2 \bar\mL_{(X,Z)} \succeq \rho^k\left(1- \frac{\sqrt{\lambda_N^2+4\lambda_N}-\lambda_N}{2}\right)I.
	$
\end{lemma}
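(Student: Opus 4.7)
The plan is to compute $\nabla^2 \bar\mL_{(X,Z)}$ directly and then bound its smallest eigenvalue by solving a $2\times 2$ eigenvalue subproblem indexed by the singular values of $Y$. First I would observe that, in the linearized augmented Lagrangian \eqref{eq:AugLagLin}, every term is either affine or independent of $(X,Z)$ except the two quadratic penalties: $\bar f^k$ is affine in $Z$ by construction, $\delta_{\mC}(Y)$ does not depend on $(X,Z)$, and $\langle U,X-Y\rangle$ together with $\langle S,Z-XY^T\rangle$ are linear in $(X,Z)$ at fixed $Y$. Therefore
\[
\nabla^2 \bar\mL_{(X,Z)} \;=\; \nabla^2_{(X,Z)}\Big(\tfrac{\rho^k}{2}\|X-Y\|_F^2 + \tfrac{\rho^k}{2}\|Z-XY^T\|_F^2\Big).
\]

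Next I would vectorize. Writing $\mathrm{vec}(XY^T)=(Y\otimes I_n)\,\mathrm{vec}(X)$ and stacking $(\mathrm{vec}(X),\mathrm{vec}(Z))$, the Hessian becomes the block matrix
\[
H \;=\; \rho^k \begin{bmatrix} I_{nr} + Y^TY \otimes I_n & -\,Y^T\otimes I_n \\[2pt] -\,Y\otimes I_n & I_{n^2} \end{bmatrix}.
\]
To find $\lambda_{\min}(H)$, I would look for a joint eigenvector of the form $(x,z)$ solving $Hv=\lambda v$. Solving the second block equation for $z$ in terms of $x$ and substituting into the first yields, after simplification, the scalar identity $(\rho^k-\lambda)^2 = \rho^k\lambda\,\mu$, where $\mu$ ranges over the eigenvalues of $Y^TY\otimes I_n$, i.e.\ over the eigenvalues $\{\sigma_i(Y)^2\}$ of $Y^TY$ (each with multiplicity $n$). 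The trivial case $\lambda=\rho^k$ corresponding to vectors in the kernel of $Y\otimes I_n$ is also covered by the same identity at $\mu=0$.

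Solving the quadratic $\lambda^2 - \rho^k(2+\mu)\lambda + (\rho^k)^2 = 0$ gives
\[
\lambda_{\pm}(\mu) \;=\; \rho^k\cdot\frac{(2+\mu)\pm\sqrt{\mu^2+4\mu}}{2} \;=\; \rho^k\left(1 \pm \frac{\sqrt{\mu^2+4\mu}-\mu}{2} \cdot (\pm 1)\right),
\]
so the smaller branch is $\lambda_-(\mu) = \rho^k\bigl(1-\tfrac{\sqrt{\mu^2+4\mu}-\mu}{2}\bigr)$. A quick check shows $\lambda_-(\mu)>0$ for every $\mu\ge 0$ (square the inequality $\sqrt{\mu^2+4\mu}<\mu+2$). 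The function $\mu\mapsto \sqrt{\mu^2+4\mu}-\mu$ is monotonically increasing on $[0,\infty)$, so $\lambda_-(\mu)$ is minimized over the spectrum of $Y^TY$ at $\mu=\lambda_N := \sigma_{\max}(Y)^2$. Therefore $\lambda_{\min}(H) = \rho^k\bigl(1-\tfrac{\sqrt{\lambda_N^2+4\lambda_N}-\lambda_N}{2}\bigr)$, which is exactly the claimed bound. The only subtle step is tracking that the "worst-case" $\mu$ is the \emph{largest} eigenvalue of $Y^TY$ rather than the smallest, which follows from the monotonicity observation above; the rest is routine block-matrix algebra.
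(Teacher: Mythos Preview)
Your proof is correct and reaches the same conclusion, but the technique for extracting $\lambda_{\min}$ differs from the paper's. After writing the Hessian in block form (the paper uses $N=\blkdiag(Y^T,\ldots,Y^T)$ where you use $Y\otimes I_n$; these agree up to a permutation of the vectorization), the paper proceeds via a Schur-complement argument: it seeks the largest $\sigma$ for which $(\rho^k)^{-1}H-\sigma I\succeq 0$, reducing positive semidefiniteness to the scalar condition $(1-\sigma)+\sigma_Y^2\bigl(1-(1-\sigma)^{-1}\bigr)\geq 0$ and then solving the resulting quadratic in $1-\sigma$. You instead eliminate $z$ from the eigenvector equation to obtain $(\rho^k-\lambda)^2=\rho^k\lambda\,\mu$ directly, parametrized by the eigenvalues $\mu$ of $Y^TY$, and solve for $\lambda$. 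Both routes produce the same quadratic and the same $\lambda_{\min}=\rho^k\bigl(1-\tfrac{\sqrt{\lambda_N^2+4\lambda_N}-\lambda_N}{2}\bigr)$ with $\lambda_N=\sigma_{\max}(Y)^2$. Your monotonicity observation (that $\mu\mapsto\sqrt{\mu^2+4\mu}-\mu$ is increasing, so the worst $\mu$ is the largest eigenvalue) makes explicit a step the paper handles implicitly through the Schur-complement inequality. One cosmetic note: your intermediate display for $\lambda_\pm(\mu)$ has a stray ``$\cdot(\pm 1)$'' that does not parse; the subsequent line for $\lambda_-(\mu)$ is correct, so this is only a typo.
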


\begin{proof}
	
		For $(X,Z)$, we have
		$\nabla^2_{(X,Z)}{\mL}_k = \rho^k\begin{bmatrix}
		I + NN^T & -N \\
		-N^T & I
		\end{bmatrix}
		$
		where \\
		$
		N = \blkdiag(Y^T,\ldots,Y^T )\in\reals^{nr\times n^2}.
		$
		Note that for block diagonal matrices, $\|N\|_2 = \|Y\|_2$.
Note also that the determinant of  $\frac{1}{\rho^k} \nabla^2_{(X,Z)}{\mL}_k $ is $ \textbf{det}((I + NN^T) - NN^T) = 1 \geq \zeros$, so  $\nabla^2_{(X,Z)}\tilde{\mL}_k \succ \zeros$ and equivalently $\lambda_{\min}(\nabla^2_{(X,Z)}{\mL}_k) > 0$.
		
		To find the smallest eigenvalue $\lambda_{\min}(\nabla^2_{(X,Z)}{\mL}_k)$, it suffices to find the largest $\sigma > 0$ such that
		\begin{multline}
		H_2 = (\rho^k)^{-1}\nabla^2_{(X,Z)}\tilde{\mL}_k - \sigma I
		= \begin{bmatrix}
		(1-\sigma)I + NN^T & -N \\
		-N^T & (1-\sigma)I
		\end{bmatrix}
		\succeq\zeros.
		\end{multline}
		Equivalently,  we want to find the largest $\sigma > 0$ where $		(1-\sigma)I \succeq 0$ and the Schur complement of $H_2$
		i.4., $H_3 = (1-\sigma)I + NN^T(1-(1-\sigma)^{-1})) \succeq 0.$
		Defining $\sigma_Y = \|Y\|_2$ the largest singular vector of $Y^{k+1}$, and noting that $\lambda_{\min}(\alpha I + A) = \alpha + \lambda_{\min}(A)$ for any positive semidefinite matrix $A$, we have
		$\lambda_{\min}(H_3) = (1-\sigma) + (\sigma_Y)^2(1-(1-\sigma)^{-1}).$
		We can see that $(1-\sigma)\lambda_{\min}(H_3)$ is a convex function in $(1-\sigma)$, with two zeros at
		$1-\sigma =  \frac{\pm\sqrt{\sigma_Y^4+4\sigma_Y^2}-(\sigma_Y)^2}{2}.$
		In between the two roots, $\lambda_{\min}(H_3) <0$. Since the smaller root cannot satisfy $1-\sigma > 0$, we choose
	  $\sigma_{\max} = 1- \frac{\sqrt{\sigma_Y^4+4\sigma_Y^2}-(\sigma_Y)^2}{2} > 0$
	 	as the largest feasible $\sigma$ that maintains $\lambda_{\min}(H_3)\geq 0$. As a result,
		$%\begin{eqnarray*}
			\lambda_{\min}(\nabla^2_{(X,Z)}\bar\mL) = \rho^k\sigma_{\max}
			=\rho^k\left(1- \frac{\sqrt{\sigma_Y^4+4\sigma_Y^2}-\sigma_Y^2}{2}\right).
		$%\end{eqnarray*}
		Figure \ref{fig:xzlambdamin} shows how this term behaves according to the spectral norm of $Y$.
		
		\begin{figure}
		\begin{center}
		\includegraphics[width=3in]{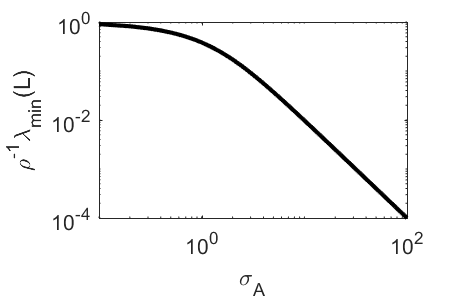}
		\end{center}
		\label{fig:xzlambdamin}
		\caption{\textbf{Strong convexity wrt $X$, $Z$.} Smallest eigenvalue of $\nabla^2_{X,Z}\mL$ as a function of the spectral norm of $Y$.}
		\end{figure}
\end{proof}

We now prove the main theorem.

\begin{lemma}\label{lemma:LagDecentXxy}
Consider the sequence
\begin{eqnarray*}
\mL^k &:=& \mL(P^k;D^k) =  f(Z^{k-1})+ \langle \nabla f(Z^{k-1}), Z-Z^{k-1} \rangle + \delta_{\mC}(Y)  \\
&+& \langle U,X-Y\rangle  +\langle S,Z - XY^T\rangle + \frac{\rho}{2}\|X-Y\|_F^2
  + \frac{\rho}{2}\|Z - XY^T\|_F^2
\end{eqnarray*}
If $f(Z)$ is $L_f$-Lipschitz smooth, then sequence $\mL^k$ generated from Alg. \ref{a:matrix} satisfies
\begin{eqnarray}\label{eq:deltaXYZ}
\mL^{k+1}-\mL^k&\leq&
-c_1^k\|{X}^{k+1}-X^k\|_F^2
-c_2^k\|{Z}^{k+1}-{Z}^k\|_F^2
-c_3^k\|Y^{k+1}-Y^k\|_F^2\nonumber\\
&&\quad +\frac{\rho^{k+1}+\rho^k}{2(\rho^k)^2}\big(\|{S}^{k+1}-{S}^{k}\|_F^2 +
 \|{U}^{k+1}-{U}^{k}\|_F^2 \big).
\end{eqnarray}
with
 $c_1^k = \frac{\rho^k}{2}\left(1- \frac{\sqrt{\sigma_Y^4+4\sigma_Y^2}-\sigma_Y^2}{2}\right) $, $c_2^k =c_1^k- \frac{L_f}{2} $, and $c_3^k = \frac{\rho^{k}}{2} > 0$.

\end{lemma}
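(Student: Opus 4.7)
The plan is to decompose $\mL^{k+1} - \mL^k$ as a telescoping sum through the three updates carried out in one pass of Algorithm \ref{a:matrix}, by inserting the intermediate values $\mL^{k+1}(P^{k+1}; D^k, \rho^k)$ and $\mL^{k+1}(X^k, Y^{k+1}, Z^k; D^k, \rho^k)$. This produces a dual/penalty bracket, an $(X,Z)$-descent bracket, and a $Y$-descent bracket, each of which I can bound separately. For the dual/penalty bracket I would substitute the update rules $S^{k+1}-S^{k} = \rho^k(Z^{k+1} - X^{k+1}(Y^{k+1})^T)_\Omega$ and $U^{k+1}-U^{k} = \rho^k(X^{k+1}-Y^{k+1})$ directly into the $S$, $U$, and $\rho$ dependent terms of $\mL$; the two inner-product differences contribute $\tfrac{1}{\rho^k}(\|S^{k+1}-S^k\|_F^2 + \|U^{k+1}-U^k\|_F^2)$ and the quadratic penalty differences contribute $\tfrac{\rho^{k+1}-\rho^k}{2(\rho^k)^2}$ times the same norms, collapsing to the claimed $\tfrac{\rho^{k+1}+\rho^k}{2(\rho^k)^2}(\|S^{k+1}-S^k\|_F^2 + \|U^{k+1}-U^k\|_F^2)$.

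For the $(X,Z)$-descent bracket I would invoke Lemma \ref{lem:hessianboundXZ}: the linearized augmented Lagrangian $\bar\mL^{k+1}$ is $\rho^k\sigma_{\max}$-strongly convex jointly in $(X,Z)$, and $(X^{k+1},Z^{k+1})$ is its exact minimizer over the affine set $\{Z : \mA(Z) = b\}$, so standard quadratic descent gives $-\tfrac{\rho^k\sigma_{\max}}{2}(\|X^{k+1}-X^k\|_F^2 + \|Z^{k+1}-Z^k\|_F^2)$ on $\bar\mL^{k+1}$. Passing from the linearization $\bar f^{k+1}$ back to the true $f$ costs at most $\tfrac{L_f}{2}\|Z^{k+1}-Z^k\|_F^2$ by $L_f$-smoothness, and absorbing this error into the $Z$ coefficient preserves $c_1^k = \tfrac{\rho^k\sigma_{\max}}{2}$ for $X$ and produces $c_2^k = c_1^k - L_f/2$ for $Z$. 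For the $Y$-descent bracket, Lemma \ref{lem:hessianboundY} gives $\rho^k$-strong convexity of $\mL$ in $Y$, and rewriting the $Y$-dependent part as $\tfrac{1}{2}\|Y-\tilde Y\|_H^2 + \text{const}$ with $H\succeq \rho^k I$, I would apply the three-point identity $\|Y^k-\tilde Y\|_H^2 - \|Y^{k+1}-\tilde Y\|_H^2 = \|Y^k-Y^{k+1}\|_H^2 + 2\langle Y^{k+1}-\tilde Y, Y^k-Y^{k+1}\rangle_H$ to extract $-c_3^k\|Y^{k+1}-Y^k\|_F^2$ with $c_3^k = \rho^k/2$.

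The main obstacle is this last step: because $\mC$ is nonconvex, the variational inequality $\langle \nabla q(Y^{k+1}), Y^k - Y^{k+1}\rangle \geq 0$ that underpins the convex descent proof need not hold, so the cross term in the three-point identity may carry either sign, and only the weaker inequality $q(Y^{k+1}) \leq q(Y^k)$ follows directly from $Y^{k+1}$ being a global minimizer of $q$ over $\mC$ with $Y^k\in\mC$. To recover the full $-c_3^k\|Y^{k+1}-Y^k\|_F^2$ I would appeal either to the specific structure of the sets $\mC$ of interest, such as $\{-1,1\}^n$ or the unit sphere, where the minimizer admits a closed form and the cross term can be signed by hand, or to a local stationarity argument combining $-\nabla q(Y^{k+1}) \in \mN_\mC(Y^{k+1})$ with the observation that consecutive iterates lie in a common local branch of $\mC$. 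A secondary bookkeeping issue is the shift of linearization point from $Z^{k-1}$ to $Z^k$ when forming $\mL^{k+1}-\mL^k$; I would neutralize this by tracking the non-linearized Lagrangian throughout, so that the $\tfrac{L_f}{2}\|Z^{k+1}-Z^k\|_F^2$ smoothness bound used above absorbs the entire linearization discrepancy in one stroke.
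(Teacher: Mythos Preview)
Your decomposition and all three bracket estimates coincide with the paper's proof: the same telescoping through $\mL^Y$, $\mL^{XZ}$, and $\mL^{k+1}$, the same invocation of Lemmas~\ref{lem:hessianboundY} and~\ref{lem:hessianboundXZ} for the primal descent terms, the same $L_f$-smoothness correction $\mL^{XZ}-\bar\mL^{XZ}\leq \tfrac{L_f}{2}\|Z^{k+1}-Z^k\|_F^2$ absorbed into $c_2^k$, and the identical dual/penalty computation yielding $\tfrac{\rho^{k+1}+\rho^k}{2(\rho^k)^2}(\|S^{k+1}-S^k\|_F^2+\|U^{k+1}-U^k\|_F^2)$. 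On the $Y$-step obstacle you flag, the paper is actually less careful than you are: it simply writes $\mL^Y-\mL^k\leq\langle\nabla_Y\mL^Y,Y^{k+1}-Y^k\rangle-\tfrac{\rho^k}{2}\|Y^{k+1}-Y^k\|_F^2$ and drops the inner product by ``optimality of $Y^{k+1}$'' without addressing the nonconvexity of $\mC$, so your proposed structural or local-branch justifications go beyond what the paper itself supplies.
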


\begin{proof}
The proof outline of Lemma \ref{lemma:LagDecentXxy} is to show that each update step is a non-ascent step in the linearized augmented Lagrangian, and at least one update step is descent. We can describe the linearized ADMM in terms of four groups of updates:
 the primal variable $Y$, the primal variables $X$ and $Z$, the dual variables $U$, $S$, and coefficient $\rho$.

In other words, at iteration $k$, taking
%\begin{enumerate}
i) $\mL^k =  \mathcal{L}(Z^{k},X^{k},Y^{k};D^k;\rho^k; G^k) $,
ii) $\mL^Y = \mathcal{L}(Z^{k},X^{k},Y^{k+1};D^k;\rho^k; G^k)$,
iii) $\mL^{XZ} = \mathcal{L}(P^{k+1};D^k;\rho^k; G^k)$, and
iv) $\mL^{k+1} = \mathcal{L}(P^{k+1};D^{k+1};\rho^{k+1}; G^k)$
%\end{enumerate}
and $
\mL^{k+1} - \mL^k = (\mL^Y - \mL^k) + (\mL^{XZ} - \mL^Y) + (\mL^{k+1}-\mL^{XZ}).
$
We now lower bound each term.

\begin{enumerate}
\item \underline{Update $Y$.}
For the update of $Y$ in (\ref{eq:xnewXxy}), taking \\ $\mL^Y = \mathcal{L}(Z^{k},X^{k},Y^{k+1};D^k;\rho^k; G^k) $,  we have
\begin{eqnarray}
\mL^Y - \mL^k &\overset{(a)}{\leq}& \langle \nabla_Y\mL^Y,Y^{k+1}-Y^k\rangle-\frac{\lambda_{\min}(\nabla^2_{\text{vec}Y}\mL^Y)}{2}\|Y^{k+1}-Y^k\|_F^2\nonumber\\
&\overset{(b)}{\leq}& -\frac{\rho^{k}}{2}\|Y^{k+1}-Y^k\|_F^2 \label{eq:D2Xxy}
\end{eqnarray}
where (a) follows from the definition of strong convexity, and (b)  the optimality of $Y^{k+1}$.

%with $\nabla^2_{Y}\bar{\mL} = \nabla^2_{\text{vec}(Y)}\mL(Z^{k},X^{k},Y^{K+1};D^k;\rho^{k}) \succeq \frac{\rho^{k}}{2}I$.
\item \underline{Update $X$, $Z$.}
Similarily, the update of $(Z,X)$ in (\ref{eq:xnewXxy}), denoting \\ $\mL^{XZ} = \mathcal{L}(P^{k+1}; D^k;\rho^{k}; G^k)$, we have
\begin{eqnarray}
\bar \mL^{XZ} - \mL^Y &\overset{(a)}{\leq} &\langle \nabla_{{Z}}\bar \mL^{XZ},{Z}^{k+1}-{Z}^k\rangle  + \langle \nabla_{X}\bar\mL^{XZ},X^{k+1}-X^k\rangle  \nonumber \\
&&\quad - \frac{\lambda_{\min}(\nabla^2_{(X,Z)}\mL^{XZ})}{2}\big(\|{Z}^{k+1}-{Z}^k\|_F^2 + \|X^{k+1}-X^k\|_F^2\big)\nonumber\\
&\overset{(b)}{\leq}&  -\frac{\lambda_{\min}(\nabla^2_{(X,Z)}\bar\mL^{XZ})}{2}\left(\|{Z}^{k+1}-{Z}^k\|_F^2 + \|X^{k+1}-X^k\|_F^2\right),\label{eq:D1Xxy}
\end{eqnarray}
where (a) follows from the definition of strong convexity, and (b)  the optimality of $X^{k+1}$ and $Z^{k+1}$.
To further bound $\mL^{XZ}-\bar\mL^{XZ}$, we use the linearization definitions
\begin{eqnarray}
\mL^{XZ}-\bar\mL^{XZ} &=& f(Z^{k+1})-f(Z^{k}) - \langle \nabla f(Z^{k}) , Z^{k+1} - Z^k\rangle \nonumber\\
&\overset{(a)}{=}& \leq \frac{L_f}{2}\|Z^{k+1}-Z^k\|_F\label{eq:D1XxyLin}
\end{eqnarray}
where (a) comes from the $L_f$ Lipschitz smooth property of $f$.

\item \underline{Update $S$, $U$, and $\rho$.}
For the update of the dual variables and the penalty coefficient, with $\mL^k = \mL(P^k;D^k;\rho^k)$, we have
\bea\label{eq:D3Xxy}
\mathcal{L}^{D} -  \mathcal{L}^{XZ}\overset{(a)}{=}&  \langle S^{k+1}-S^{k},Z^{k+1}-X^{k+1}(Y^{k+1})^T\rangle
 + \langle U^{k+1}-U^{k},X^{k+1}-Y^{k+1}\rangle \nonumber\\
 +&\frac{\rho^{k+1}-\rho^k}{2}(\|Z^{k+1}-X^{k+1}(Y^{k+1})^T\|_F^2)
 +\frac{\rho^{k+1}-\rho^k}{2}(\|X^{k+1} - Y^{k+1}\|_F^2)\nonumber\\
&\overset{(b)}{=}  \frac{\rho^{k+1}+\rho^k}{2(\rho^k)^2}\big(\|S^{k+1}-S^{k}\|_F^2
  + \|U^{k+1}-U^{k}\|_F^2 \big)
\eea
where the (a) follows the definition of $\mathcal{L}$ and (b) from the dual update procedure.

%Recall the definition of $\mathcal{L}$ and $\mathcal{L}_k$, and it leads to that
%\begin{align}\label{eq:D1Xxy2}
%&\mathcal{L}_{k+1}^\prime(P^{k+1}; D^k;\rho^{k})  =  \mathcal{L}(P^{k+1}; D^k;\rho^{k}) - f(Z^{k+1}) + \langle \nabla f(Z^{k}), Z-Z^{k} \rangle\nonumber\\
%&\mathcal{L}_{k}^\prime(Z^{k},X^{k},Y^{k+1};D^k;\rho^{k}) = \mathcal{L}(Z^{k},X^{k},Y^{k+1};D^k;\rho^{k}) - f(Z^{k})
%\end{align}
%Combining \eqref{eq:D1Xxy} and \eqref{eq:D1Xxy2} yields
%\begin{align}\label{eq:D1Xxy3}
%&\mathcal{L}(P^{k+1}; D^k;\rho^{k}) - \mathcal{L}(Z^{k},X^{k},Y^{k+1};D^k;\rho^{k}) + f(Z^{k}) - f(Z^{k+1}) + \langle \nabla f(Z^{k}), Z-Z^{k} \rangle \nonumber\\
%\le\;&  -\frac{\lambda_{\min}(\nabla^2_{(X,Z)}\tilde{\mL}_k)}{2}\big(\|{Z}^{k+1}-{Z}^k\|_F^2 + \|X^{k+1}-X^k\|_F^2\big),
%\end{align}
%
%%Considering Assumption \ref{Assumption:Boundedf}, \eqref{eq:D1Xxy3} further results in
%\begin{align}\label{eq:D1Xxy4}
%&\mathcal{L}(P^{k+1}; D^k;\rho^{k}) - \mathcal{L}(Z^{k},X^{k},Y^{k+1};D^k;\rho^{k}) \nonumber\\
%\le & -\frac{\lambda_{\min}(\nabla^2_{(X,Z)}\tilde{\mL}_k) - L_f}{2}\big(\|{Z}^{k+1}-{Z}^k\|_F^2\big) - \frac{\lambda_{\min}(\nabla^2_{(X,Z)}\tilde{\mL}_k)}{2}\big(\|X^{k+1}-X^k\|_F^2\big),
%\end{align}
%with $\lambda_{\min}(\nabla^2_{(X,Z)}\tilde{\mL}_k) - L_f > 0$ given $\rho^k\left(1- \frac{\sqrt{\lambda_N^2+4\lambda_N}-\lambda_N}{2}\right) > L_f$.
\end{enumerate}

The lemma statement results  by incorporating (\ref{eq:D1Xxy}), (\ref{eq:D2Xxy}), (\ref{eq:D3Xxy}), and \eqref{eq:D1XxyLin}.
\end{proof}

\begin{lemma}
If $\mL_k$ is unbounded below, then either problem \eqref{eq:main} is unbounded below, or the sequence $L_f\|Z_k-Z_{k-1}\|_F$ diverges.
\end{lemma}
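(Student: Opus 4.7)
The plan is to lower-bound $\mL^k$ by an expression involving the actual objective value $f(Z^k)$ plus a controllable error term, after which the claimed dichotomy falls out by contrapositive. The key manipulation is to eliminate the inner-product cross terms in $\mL^k$ by completing the square, so that the residual quadratics become manifestly nonnegative, leaving only the linearization of $f$ and a dual-norm remainder that is tamed by the boundedness hypothesis on $S^k, U^k$.

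Concretely, I would first rewrite
\begin{eqnarray*}
\mL^k &=& \bar f^k(Z^k) + \delta_\mC(Y^k) + \frac{\rho^k}{2}\|Z^k - X^k (Y^k)^T + S^k/\rho^k\|_F^2 \\
&& + \frac{\rho^k}{2}\|X^k - Y^k + U^k/\rho^k\|_F^2 - \frac{\|S^k\|_F^2 + \|U^k\|_F^2}{2\rho^k},
\end{eqnarray*}
where $\bar f^k(Z) := f(Z^{k-1}) + \langle \nabla f(Z^{k-1}), Z - Z^{k-1}\rangle$ is the linearization of $f$ at $Z^{k-1}$. The $Y$-update \eqref{eq:xnewXxy} enforces $Y^k \in \mC$, so $\delta_\mC(Y^k) = 0$; the two squared-norm penalty terms are manifestly nonnegative; and, using the dual-boundedness hypothesis $\|S^k\|_F, \|U^k\|_F \leq B_P$ together with $\rho^k = \alpha^k \rho_0 \geq \rho_0 > 0$, the final correction term is bounded below by $-B_P^2/\rho_0$. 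Combining these observations gives $\mL^k \geq \bar f^k(Z^k) - B_P^2/\rho_0$.

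The standard descent lemma for $L_f$-smooth $f$ then yields $\bar f^k(Z^k) \geq f(Z^k) - \frac{L_f}{2}\|Z^k - Z^{k-1}\|_F^2$, and therefore
\[
\mL^k \;\geq\; f(Z^k) - \frac{L_f}{2}\|Z^k - Z^{k-1}\|_F^2 - \frac{B_P^2}{\rho_0}.
\]
If $\mL^k \to -\infty$, the right-hand side must also diverge, leaving only two possibilities: either $f(Z^k) \to -\infty$ and, since the $Z$-update \eqref{eq:z1z2Xxy} enforces $\mA(Z^k) = b$ for every $k$, problem \eqref{eq:main} is unbounded below on its affine relaxation (and hence unbounded below in the only meaningful sense for the original formulation); or $L_f \|Z^k - Z^{k-1}\|_F^2 \to \infty$, in which case $L_f \|Z^k - Z^{k-1}\|_F \to \infty$ as well. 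The main subtlety I anticipate is not computational but interpretive: since the iterates $Z^k$ need not be positive semidefinite or factorizable as $XX^T$ with $X \in \mC$, the first alternative really asserts that even the convex affine relaxation of \eqref{eq:main} has infimum $-\infty$, which is the appropriate reading of ``\eqref{eq:main} is unbounded below'' in this context.
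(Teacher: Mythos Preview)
Your proposal is correct and follows essentially the same route as the paper: complete the square to isolate the linearization $\bar f^k(Z^k)$ plus nonnegative penalty terms minus a dual-norm remainder controlled by the boundedness assumption, then invoke $L_f$-smoothness to compare $\bar f^k(Z^k)$ with $f(Z^k)$ and read off the dichotomy. Your treatment is in fact slightly more careful than the paper's (you make explicit that $\delta_\mC(Y^k)=0$ and flag the interpretive point that $Z^k$ is only feasible for the affine relaxation), but the argument is the same.
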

\begin{proof}
First, consider the case that $\mL_k$ is unbounded below. First rewrite $\mL^k$ equivalently as
\begin{eqnarray*}
\mL^k &=& f(Z^{k-1})+ \langle \nabla f(Z^{k-1}), Z^k-Z^{k-1} \rangle + \delta_{\mC}(Y^k)  +
 \frac{\rho}{2}\|X^k-Y^k + \frac{1}{\rho^k}U^k\|_F^2
  \\&&\quad+ \frac{\rho}{2}\|Z^k - X^k(Y^k)^T + \frac{1}{\rho^k}S^k\|_F^2
  -\frac{1}{2\rho^k}\|U^k\|_F^2-\frac{1}{2\rho^k}\|S^k\|_F^2.
\end{eqnarray*}
Since $\|U^k\|_F$ and $\|S^k\|_F$  are bounded above, this implies that the linearization
$g^k := f(Z^{k-1})+ \langle \nabla f(Z^{k-1}), Z^k-Z^{k-1} \rangle$ is unbounded below.

Note that
\[
g^k - f(Z^k) = f(Z^{k-1})-f(Z^k) - \nabla f(Z^{k-1}),Z^{k-1}-Z^k\rangle \geq - \frac{L_f}{2}\|Z^k-Z^{k-1}\|_F^2
\]
which implies either $f(Z^k)\to-\infty$ or $L_f\|Z^k-Z^{k-1}\|_F^2\to+\infty$.
\end{proof}
\begin{corollary}
If $\mL_k$ is unbounded below and the objective $f(Z) = \tr(CZ)$ then it must be that \eqref{eq:main} is unbounded below. This follows immediately since $L_f = 0$.
\end{corollary}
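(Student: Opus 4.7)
The plan is to invoke the previous lemma as a black box and use the fact that $f(Z) = \tr(CZ)$ is affine in $Z$. Since the gradient $\nabla f(Z) = C$ is then \emph{constant}, we have
\[
\|\nabla f(Z_1) - \nabla f(Z_2)\|_F = 0 \quad \text{for all } Z_1, Z_2,
\]
so $L_f = 0$ is a valid Lipschitz modulus for $\nabla f$. This is the one-line verification that turns the assumption of the corollary into a direct application of the preceding lemma.

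Next, I would feed this into the dichotomy supplied by the lemma: whenever $\mL_k$ is unbounded below, either the original problem \eqref{eq:main} is unbounded below, \emph{or} the scalar sequence $L_f \|Z^k - Z^{k-1}\|_F$ diverges to $+\infty$. With $L_f = 0$, the product $L_f \|Z^k - Z^{k-1}\|_F$ is identically zero for every $k$, and in particular bounded. Therefore the second alternative is logically impossible, and we are forced into the first: problem \eqref{eq:main} is unbounded below, which is exactly what the corollary asserts.

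There is essentially no hard part here; the corollary is just an immediate specialization of the lemma once one observes that linear objectives trivially satisfy $L_f = 0$. The only thing to be slightly careful about is to note that the lemma's statement is a genuine dichotomy (``either \ldots or the sequence diverges''), so eliminating one branch forces the other. No additional estimates, no further appeal to convergence of the dual variables or the penalty schedule $\rho^k$, and no use of the descent inequality from Lemma~\ref{lemma:LagDecentXxy} are needed beyond what the lemma already encapsulates.
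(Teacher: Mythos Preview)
Your proposal is correct and matches the paper's approach exactly: the paper's entire proof is the one-line remark ``This follows immediately since $L_f = 0$,'' and you have simply unpacked that remark by noting that a linear $f$ has constant gradient (hence $L_f=0$) and that this kills the second branch of the preceding lemma's dichotomy.
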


\begin{theorem}
Assume the dual variables are bounded,
 e.g. \\
$
\max\{\|S^k\|_F,\|U^k\|_F,\|Y^k\|_F\}_k \leq B_P < +\infty,
$
and
$\frac{L_f}{\sigma_{\max}}$ is bounded above, where
$
\sigma_{\max} = 1 - \frac{\sqrt{\sigma_Y^4 + 4\sigma_Y^2} - \sigma_Y^2}{2},\quad \sigma_Y = \|Y^{k+1}\|_2.
$
Then by running  Alg. \ref{a:matrix} with $\rho^k = \alpha \rho^{k-1}=\alpha^k\rho_0$, if $\mL_k$ is bounded below, then
the sequence  $\{P^k,D^k\}$ converges to a stationary point of \eqref{eq:AugLag-1}.

\end{theorem}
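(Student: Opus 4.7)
The plan is to use Lemma~\ref{lemma:LagDecentXxy} as the descent engine, convert the descent recursion into absolute summability of the primal updates, and then pass to the limit in the subproblem KKT relations. First I would observe that because $\rho^k=\alpha^k\rho_0$ grows geometrically and $\|S^{k+1}-S^k\|_F,\|U^{k+1}-U^k\|_F\leq 2B_P$ by the dual-boundedness hypothesis, the nonnegative term on the right side of \eqref{eq:deltaXYZ} is bounded by $C\alpha^{-k}$ and hence summable. Since $L_f/\sigma_{\max}$ is bounded above, there exists $K_0$ such that $c_2^k=\tfrac12(\rho^k\sigma_{\max}-L_f)>0$ for all $k\geq K_0$, while $c_1^k,c_3^k=\Theta(\rho^k)>0$ automatically (the implicit constant in $c_1^k$ is positive because $\|Y^k\|_2\leq B_P$ keeps $\sigma_{\max}$ bounded below away from zero). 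Combining these facts with the lower-boundedness of $\mL^k$ and telescoping \eqref{eq:deltaXYZ} from $K_0$ to $\infty$ yields
\[
\sum_{k\geq K_0}\rho^k\bigl(\|X^{k+1}-X^k\|_F^2+\|Z^{k+1}-Z^k\|_F^2+\|Y^{k+1}-Y^k\|_F^2\bigr)<\infty.
\]

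Next I would upgrade this to \emph{absolute} summability of the primal increments via Cauchy--Schwarz,
\[
\sum_k\|X^{k+1}-X^k\|_F\leq\Bigl(\sum_k\rho^k\|X^{k+1}-X^k\|_F^2\Bigr)^{1/2}\Bigl(\sum_k(\rho^k)^{-1}\Bigr)^{1/2}<\infty,
\]
since $\sum_k(\rho^k)^{-1}=\rho_0^{-1}\sum_k\alpha^{-k}<\infty$, and identically for $\{Y^k\}$ and $\{Z^k\}$. Each primal sequence is therefore Cauchy and converges to a limit $(X^*,Y^*,Z^*)$. From the dual-update formulas $S^{k+1}-S^k=\rho^k(Z^{k+1}-X^{k+1}(Y^{k+1})^T)_\Omega$ and $U^{k+1}-U^k=\rho^k(X^{k+1}-Y^{k+1})$ together with dual boundedness, the primal residuals obey $\|(Z^{k+1}-X^{k+1}(Y^{k+1})^T)_\Omega\|_F\leq 2B_P/\rho^k\to 0$ and $\|X^{k+1}-Y^{k+1}\|_F\leq 2B_P/\rho^k\to 0$, so the primal limit is feasible: $X^*=Y^*$ and $(X^*(Y^*)^T)_\Omega=Z^*$.

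For stationarity I would pass to the limit in the per-iteration KKT relations. The $Y$-subproblem yields $-\nabla_Y\bar{\mL}\in\mN_\mC(Y^{k+1})$ evaluated at the current iterates, while the $(X,Z)$-subproblem optimality \eqref{eq:optimalXZ} combined with the dual-update identities collapses to the closed-form relations $U^{k+1}=S^{k+1}Y^{k+1}$ and $\mA^*(\nu^{k+1})=\nabla f(Z^k)+S^{k+1}$. Because $\mA$ is surjective in the cases of interest ($\mA=\diag$ or $\mA=\tr$) and the primal iterates converge, these relations uniquely pin the duals $(S^{k+1},U^{k+1},\nu^{k+1})$ as continuous functions of $(X^{k+1},Y^{k+1},Z^{k+1},Z^k)$; hence the full dual sequence also converges to a unique limit $(S^*,U^*,\nu^*)$. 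Passing to the limit using continuity of $\nabla f$ and outer semicontinuity of $\mN_\mC$ then recovers the coordinatewise-minimum plus primal-feasibility conditions, which by Proposition~1 are exactly the stationarity conditions of \eqref{eq:AugLag-1}.

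The hard part will be the step from square-summability to absolute summability of primal differences; this relies critically on the \emph{geometric} schedule $\rho^k=\alpha^k\rho_0$ with $\alpha>1$, so that $\sum(\rho^k)^{-1}<\infty$, and the argument would fail under a merely divergent schedule such as $\rho^k\propto k$. A secondary delicacy is promoting subsequential convergence of the duals to full convergence; I handle this by reading off $S^{k+1},U^{k+1},\nu^{k+1}$ directly from the KKT identities as explicit functions of the already-convergent primal iterates, but without surjectivity of $\mA$ (equivalently injectivity of $\mA^*$) one would only conclude that every cluster point of $\{P^k,D^k\}$ is stationary rather than that the whole sequence converges.
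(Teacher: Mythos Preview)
Your overall strategy matches the paper's: telescope the descent inequality of Lemma~\ref{lemma:LagDecentXxy}, use dual boundedness to make the positive term on the right a summable geometric series in $\alpha^{-k}$, deduce that the weighted primal increments are summable, obtain primal feasibility from $\rho^k\to\infty$ together with dual boundedness, and read off stationarity from the subproblem optimality conditions. Your Cauchy--Schwarz upgrade from $\sum_k\rho^k\|\Delta X^k\|_F^2<\infty$ to $\sum_k\|\Delta X^k\|_F<\infty$ is actually \emph{stronger} than what the paper does: the paper only establishes that successive differences tend to zero and then writes ``since the primal variables are convergent,'' which is not a valid deduction by itself. Your argument genuinely closes that gap and, as you note, hinges on the geometric schedule so that $\sum_k(\rho^k)^{-1}<\infty$.

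The one soft spot is your dual-convergence step. The identities $U^{k+1}=S^{k+1}Y^{k+1}$ and $\mA^*(\nu^{k+1})=\nabla f(Z^k)+S^{k+1}$ are correct consequences of the $(X,Z)$-optimality combined with the dual update, but they do \emph{not} determine $(S^{k+1},\nu^{k+1})$ from the primal iterates alone. For $\mA=\diag$, the second identity forces only the off-diagonal of $S^{k+1}$ to equal $-(\nabla f(Z^k))_{i\neq j}$ (which then does converge), while the diagonal of $S^{k+1}$ and the vector $\nu^{k+1}$ remain linked by $n$ equations in $2n$ unknowns. To close one still needs the recursive update $S^{k+1}=S^k+\rho^k(\cdot)$, which reintroduces dependence on the previous dual and breaks the ``continuous function of the primals'' claim. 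The paper's own proof has the same lacuna---it never argues full-sequence dual convergence and simply asserts stationarity of ``this accumulation point''---so your treatment is no worse than the original; but the specific assertion that the KKT relations ``uniquely pin the duals as continuous functions of $(X^{k+1},Y^{k+1},Z^{k+1},Z^k)$'' is overstated, and what you can rigorously conclude without further work is that every cluster point of $\{P^k,D^k\}$ is stationary.
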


\begin{proof}

If $f(Z)$ is linear, take $K_0 = 0$. If $f(Z)$ is $L_f>$ smooth, take $\hat K$ large enough such that for all $k > K_0$, $\alpha^k \rho \geq L_f \sigma_{\max}$. By assumption, $K_0$ is always finite.

Taking $\Delta_{XYZ}^k = \left(\|{Z}^{k+1}-{Z}^k\|_F^2+\|X^{k+1}-X^k\|_F^2
+\|Y^{k+1}-Y^k\|_F^2 \right)$ \\ and $c^k = \min\{c_1,c_2,c_3\}$ , the summation of \eqref{eq:deltaXYZ} leads to
\begin{eqnarray}\label{eq:D6Xxy}
\mL^K- \mL^{K_0} &=&
\sum_{k=K_0}^{K-1}\mL^{k+1} - \mL^k \nonumber\\
&\leq&\sum_{k=K_0}^{K-1}\frac{\rho^{k+1}+\rho^k}{2(\rho^k)^2}\left(\|S^{k+1} - S^k\|_F^2 + \|U^{k+1} - U^k\|_F^2\right)
-\sum_{k=K_0}^{K-1}c^k\Delta_{XYZ}^k\nonumber\\
&\overset{(a)}{\leq}&
4B_P\sum_{k=K_0}^{K-1}\frac{\rho^{k+1}+\rho^k}{2(\rho^k)^2}
-\sum_{k=K_0}^{K-1}c^k\Delta_{XYZ}^k 
\overset{(b)}{\leq} 4B_P\sum_{k=K_0}^{K-1}\frac{\rho^{k+1}+\rho^k}{2(\rho^k)^2}\nonumber
\end{eqnarray}
where (a) follows from  the boundedness assumption of the dual variables, and
 and (b)
follows from Lemma \ref{lem:hessianboundXZ}, \ref{lem:hessianboundY}, and careful construction of $\rho$ with respect to $L_f$ and $\|Y^{k+1}\|_2$.
Further simplifying, we see that $L^K$ is thus bounded above, since
\begin{eqnarray*}
\mL^K- \mL^{K_0}&\leq&  \lim_{K\to\infty}4B_P\sum_{k={K_0}}^{K-1}\frac{\rho^{k+1}+\rho^k}{2(\rho^k)^2} \\
&=& 4B_P\frac{1+\alpha}{2\alpha^{K_0}\rho}\left(1+\frac{1}{\alpha}+\frac{1}{\alpha^2}+\cdots\right)
= \frac{4B_P}{2\alpha^{K_0}\rho} < +\infty.
\end{eqnarray*}

If $\mL^k$ is not unbounded below, then
\begin{equation}
0\leq \sum_{k=K_0}^{K-1}(c_1 \|X^{k+1}-X^k\|_F^2 + c_2 \|{Z}^{k+1}-{Z}^k\|_F^2
+c_3 \|Y^{k+1}-Y^k\|_F^2 \bigg)\leq +\infty.
\end{equation}
Recall $c_3^k = \frac{\rho^k}{2}$, and by boundedness assumption on $\|Y^{k+1}_2$, for $k > K_0$, $c_1^k,c_2^k \propto \rho^k$.
Since additionally $\sum_k\rho_k = +\infty$, then this immediately yields
${Z}^{k+1}-{Z}^k\to 0, X^{k+1}-X^k\to 0, Y^{k+1}-Y^k\to 0$.

Therefore, since the primal variables are convergent, this implies that
\[
 Z^{k+1}-(X^{k+1}(Y^{k+1})^T)_\Omega = \frac{1}{\rho^k}(S^{k+1} - S^k),\quad
X^{k+1}-Y^{k+1} = \frac{1}{\rho^k}(U^{k+1} - U^k)
\]
converges to a constant. But since $\rho^k\to \infty$ and the dual variables are all bounded,
 then it must be that
$
 Z^{k+1}-(X^{k+1}(Y^{k+1})^T)_\Omega \to 0,
X^{k+1}-Y^{k+1} \to 0.
 $
 Therefore the limit points $X^*, Y^*$, and $Z^*$ are all feasible, and simply checking the first optimality condition will verify that this accumulation point is a stationary point of (\ref{eq:AugLag-1}).
\end{proof}

\section{Convergence analysis for vector form}\label{sec:vectorconvergence}
%%%%%%%%%%%%%%%%%

\begin{lemma}\label{lemma:Lambda}
	For  two adjacent iterations of Algorithm \ref{a:vector} we have
	 \bea\label{eq:OptCond3}
	\|U^{k+1} - U^{k}\|_2^2 \le L_g^2 \|X^{k+1} - X^{k}\|_2^2.
	\eea
\end{lemma}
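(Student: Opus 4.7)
The plan is to exploit the optimality condition of the $X$-update together with the dual update to express $U^{k+1}$ directly as $-\nabla g(X^{k+1})$, and then to invoke $L_g$-smoothness of $g$.

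More concretely, the first step is to write down the first-order optimality condition for the $X$-subproblem at iteration $k+1$. Since the $X$-update in Algorithm \ref{a:vector} solves
\[
0 = \nabla g(X^{k+1}) + U^{k} + \rho^k(X^{k+1} - Y^{k+1}),
\]
and the dual update is $U^{k+1} = U^{k} + \rho^k(X^{k+1} - Y^{k+1})$, substitution yields the clean identity
\[
U^{k+1} = -\nabla g(X^{k+1}).
\]
The same identity, applied at the previous iteration, gives $U^{k} = -\nabla g(X^{k})$.

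The second step is to take differences and apply smoothness. Subtracting the two identities gives
\[
U^{k+1} - U^{k} = -\bigl(\nabla g(X^{k+1}) - \nabla g(X^{k})\bigr),
\]
and squaring the norm then invoking the $L_g$-Lipschitz continuity of $\nabla g$ from Definition \ref{def:g} yields
\[
\|U^{k+1} - U^{k}\|_2^2 = \|\nabla g(X^{k+1}) - \nabla g(X^{k})\|_2^2 \leq L_g^2 \|X^{k+1} - X^{k}\|_2^2,
\]
which is exactly the claim.

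There is essentially no obstacle here: the lemma is a direct consequence of the fact that the $X$-subproblem is solved to first-order optimality in closed form (rather than, say, linearized or solved inexactly). The only subtlety to flag in passing is that this identification $U^{k+1} = -\nabla g(X^{k+1})$ depends on the primal $X$-update being exact, which is why the analysis requires $g$ smooth and the subproblem to be solved via its stationarity equation as written in Algorithm \ref{a:vector}.
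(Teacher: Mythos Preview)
Your proposal is correct and follows essentially the same argument as the paper: use the optimality condition of the $X$-update together with the dual update to obtain $U^{k+1} = -\nabla g(X^{k+1})$, then apply $L_g$-smoothness to the difference across iterations. The paper's proof is slightly terser (it omits the explicit subtraction step), but the route is identical.
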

\begin{proof}
	From the first order optimality conditions for the update of $X$
	\bea\label{eq:OptCond0}
	\nabla g(X^{k+1}) + U^k + \rho^k(X^{k+1}-Y^{k+1}) = 0.
	\eea
	Combining with the dual update, we get
$
	\nabla g(X^{k+1}) + U^{k+1} = 0.
$ Then result follows from the definition of $L_g$.
\end{proof}

Next we will show that the augmented Lagrangian  is monotonically decreasing and lower bounded.
\begin{lemma}\label{lemma:auglag}
	Each step in the augmented Lagrangian update is decreasing, e.g. for
	\begin{equation}
	\mL(X,Y;U;\rho) := g(X) + \delta_{\mC}(Y) + \langle U, X-Y\rangle + \frac{\rho}{2}\|X-Y\|_F^2
	\label{eq:auglag-simple}
	\end{equation}
	we have
	\begin{eqnarray}
	\mathcal{L}(Y^{k+1},X^{k+1};U^{k+1};\rho^{k+1}) &\leq&
	\mathcal{L}(Y^{k+1},X^{k+1};U^{k};\rho^{k}) \nonumber \\
	 &\leq&	\mathcal{L}(Y^{k+1},X^{k};U^{k};\rho^{k})  \leq
	\mathcal{L}(Y^{k},X^{k};U^{k};\rho^{k}).\label{eq:simple-auglagdecrease}
	\end{eqnarray}
	Furthermore, the amount of decrease is
	\begin{eqnarray}
	\mathcal{L}(Y^{k+1},X^{k+1};U^{k+1};\rho^{k+1}) -  \mathcal{L}(Y^{k},X^{k};U^{k};\rho^{k})  \nonumber \\
	\leq  - \rho^k\|Y^{k+1}-Y^k\|_F^2
	-c^k\|X^{k+1}-X^k\|_F^2.
	\label{eq:simple-descent-condition}
		\end{eqnarray}
		Here,
		\begin{itemize}
		\item
	 if $g(X)$ is $H_g$-strongly convex (where $H_g = 0$ if $g$ is convex but not strongly convex)
	 then $c^k = \frac{\rho^k+H_g}{2} - L_g^2\frac{\rho^{k+1}+\rho^{k}}{2(\rho^k)^2}$, and
	 \item if $g(X)$ is nonconvex but $L_g$-smooth, then
	     $c^k = \frac{\rho^k-3L_g}{2} - L_g^2\frac{\rho^{k+1}+\rho^{k}}{2(\rho^k)^2}$.
	     \end{itemize}
\end{lemma}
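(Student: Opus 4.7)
The plan is to mirror the three-block descent argument of Lemma \ref{lemma:LagDecentXxy}, telescoping the change in $\mL$ across the $Y$-update, the $X$-update, and the joint $(U,\rho)$-update. Writing $\mL^k := \mL(X^k,Y^k;U^k;\rho^k)$, I split
$\mL^{k+1}-\mL^k = (\mL^Y - \mL^k) + (\mL^{XY} - \mL^Y) + (\mL^{k+1} - \mL^{XY})$,
where $\mL^Y := \mL(X^k,Y^{k+1};U^k;\rho^k)$ and $\mL^{XY} := \mL(X^{k+1},Y^{k+1};U^k;\rho^k)$. The chain of inequalities (\ref{eq:simple-auglagdecrease}) will come from showing each of the first two increments is nonpositive and that the final (dual/penalty) step is dominated by the descent earned in the $X$-step; the quantitative bound (\ref{eq:simple-descent-condition}) then follows by tallying the three contributions.

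For the $Y$-step, the map $Y\mapsto\mL(X^k,Y;U^k;\rho^k)$ is $\rho^k$-strongly convex on $\R^{n\times r}$, and the indicator $\delta_\mC$ is curvature-neutral. Since $Y^{k+1}$ minimizes this function over $\mC$ and $Y^k\in\mC$, strong convexity at a constrained minimizer gives $\mL^Y-\mL^k \le -\tfrac{\rho^k}{2}\|Y^{k+1}-Y^k\|_F^2$. For the $X$-step in the strongly convex case, $X\mapsto\mL(X,Y^{k+1};U^k;\rho^k)$ is $(\rho^k+H_g)$-strongly convex with unconstrained minimizer $X^{k+1}$, yielding $\mL^{XY}-\mL^Y\le -\tfrac{\rho^k+H_g}{2}\|X^{k+1}-X^k\|_F^2$. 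In the merely $L_g$-smooth case I instead combine the smoothness upper bound $g(X^{k+1})-g(X^k)\le \langle\nabla g(X^k),X^{k+1}-X^k\rangle+\tfrac{L_g}{2}\|X^{k+1}-X^k\|_F^2$ with the identity $\|a\|^2-\|b\|^2=2\langle a-b,b\rangle+\|a-b\|^2$ applied to the quadratic penalty, and substitute the first-order optimality relation $\nabla g(X^{k+1})+U^k+\rho^k(X^{k+1}-Y^{k+1})=0$ to eliminate $U^k+\rho^k(X^k-Y^{k+1})$. After bounding $\|\nabla g(X^k)-\nabla g(X^{k+1})\|$ by $L_g\|X^{k+1}-X^k\|$ via Lipschitzness, the remainder collapses to exactly $-\tfrac{\rho^k-3L_g}{2}\|X^{k+1}-X^k\|_F^2$.

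For the dual-and-penalty step, direct substitution gives $\mL^{k+1}-\mL^{XY} = \langle U^{k+1}-U^k,X^{k+1}-Y^{k+1}\rangle + \tfrac{\rho^{k+1}-\rho^k}{2}\|X^{k+1}-Y^{k+1}\|_F^2$, and using $U^{k+1}-U^k=\rho^k(X^{k+1}-Y^{k+1})$ this collapses to $\tfrac{\rho^{k+1}+\rho^k}{2(\rho^k)^2}\|U^{k+1}-U^k\|_F^2$. Lemma \ref{lemma:Lambda} then bounds this ascent by $L_g^2\tfrac{\rho^{k+1}+\rho^k}{2(\rho^k)^2}\|X^{k+1}-X^k\|_F^2$, which I absorb into the descent earned in the $X$-step to produce the coefficient $c^k$ in each regime. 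Summing the three contributions yields (\ref{eq:simple-descent-condition}); the intermediate chain (\ref{eq:simple-auglagdecrease}) then follows once I verify that, under the stated conditions on $\{\rho^k\}$ of Theorem \ref{thm:simple-convergence} (respectively the constant-$\rho$ condition of Theorem \ref{thm:LagDecent-linear}), each of the three increments is individually nonpositive.

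The main obstacle is the $X$-step in the nonconvex-$g$ regime, where no strong convexity is available and a naive application of $L_g$-smoothness gives an ascent term $+\tfrac{L_g}{2}\|X^{k+1}-X^k\|_F^2$ rather than descent. The decisive maneuver is to fold the first-order optimality condition for $X^{k+1}$ into the linear term so that $U^k$ cancels and only $\nabla g(X^k)-\nabla g(X^{k+1})$ survives, which $L_g$-Lipschitzness then controls by $L_g\|X^{k+1}-X^k\|_F^2$; together with the $-\rho^k\|X^{k+1}-X^k\|_F^2$ harvested from the cross-term expansion of the quadratic penalty, this produces the coefficient $\tfrac{\rho^k-3L_g}{2}$. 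This single computation is the only place where the hypothesis $\rho^k>L_g$ is genuinely needed; the rest of the proof is routine bookkeeping.
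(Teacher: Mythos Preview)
Your proposal is correct and mirrors the paper's proof essentially line for line: the same three-way telescoping $(\mL^Y-\mL^k)+(\mL^{XY}-\mL^Y)+(\mL^{k+1}-\mL^{XY})$, the same strong-convexity bounds for the $Y$- and convex-$X$ steps, the same add-and-subtract-$\nabla g(X^{k+1})$ maneuver combined with Cauchy--Schwarz and $L_g$-Lipschitzness to extract the $-\tfrac{\rho^k-3L_g}{2}$ coefficient in the nonconvex-$g$ case, and the same use of Lemma~\ref{lemma:Lambda} to convert the dual/penalty ascent into $L_g^2\tfrac{\rho^{k+1}+\rho^k}{2(\rho^k)^2}\|X^{k+1}-X^k\|_F^2$. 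The only cosmetic difference is that you expand the quadratic penalty via the identity $\|a\|^2-\|b\|^2=2\langle a-b,b\rangle+\|a-b\|^2$ and then substitute the optimality condition to cancel $U^k$, whereas the paper first rewrites $g(X^{k+1})-g(X^k)$ to isolate $\langle\nabla g(X^{k+1}),X^{k+1}-X^k\rangle$ and then invokes strong convexity of the quadratic part of $\mL$ in $X$; the resulting arithmetic is identical.
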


\begin{proof}
Both the updates of $Y$ and $X$ globally minimize $\mL$ with respect to those variables. To minimize $Y$ at $(X,U) = (X^k,U^k)$:
\begin{eqnarray}
		\mathcal{L}(Y^{k+1},{X};{U};\rho) - \mathcal{L}(Y^k,{X};{U};\rho)
		&\overset{(a)}{\leq}& \langle \nabla_{Y}\mL(Y^{k+1},{X};{U}; \rho),{Y}^{k+1}-{Y}^k\rangle
				-\frac{\rho^k}{2}\|Y^{k+1}-Y^k\|_2^2 \nonumber\\
		&\overset{(b)}{\leq}& -\frac{\rho^k}{2}\|{Y}^{k+1}-{Y}^k\|_2^2.
	\end{eqnarray}
	
To minimize $X$ at $(Y,U) = (Y^{k+1},U^k)$, we consider two cases. If $g$ is $H_g$-strongly convex, then
\begin{eqnarray}
		\mathcal{L}(Y,{X}^{k+1};{U};\rho) - \mathcal{L}(Y,{X}^k;{U};\rho)
		&\overset{(a)}{\leq}& \langle \nabla_{{X}}\mL(Y,{X}^{k+1};{U}; \rho),{X}^{k+1}-{X}^k\rangle
		\nonumber\\&&\quad
		-\frac{\rho^k+H_g}{2}\|{X}^{k+1}-{X}^k\|_2^2   \nonumber\\
		&\overset{(b)}{\leq}& -\frac{\rho^k+H_g}{2}\|{X}^{k+1}-{X}^k\|_2^2
	\end{eqnarray}
	where (a) follows from the strong convexity of  $\mL(Y,{X};U;\rho)$ with respect to $X$, and (b) follows from the optimality condition of the update.
	If $g$ is nonconvex but $L_g$-Lipschitz, then note that	
\begin{eqnarray*}
g(X^{k+1}) - g(X^k) &\leq& \langle \nabla g(X^k),X^{k+1}-X^k\rangle + \frac{L_g}{2}\|X^{k+1}-X^k\|_F^2\\
&\overset{(a)}{=}& \langle \nabla g(X^k)- \nabla g(X^{k+1}),X^{k+1}-X^k\rangle
\nonumber\\&&\quad
+ \frac{L_g}{2}\|X^{k+1}-X^k\|_F^2
+ \langle\nabla g(X^{k+1}),X^{k+1}-X^k\rangle\\
&\overset{(b)}{\leq}& \| \nabla g(X^k)- \nabla g(X^{k+1})\|_F\|X^{k+1}-X^k\|_F
\nonumber\\&&\quad
+ \frac{L_g}{2}\|X^{k+1}-X^k\|_F^2
+ \langle\nabla g(X^{k+1}),X^{k+1}-X^k\rangle\\
&\overset{(c)}{\leq}& \frac{3L_g}{2}\|X^{k+1}-X^k\|_F^2
+\langle \nabla g(X^{k+1}),X^{k+1}-X^k\rangle
\end{eqnarray*}
where (a) follows from adding and subtracting a term, (b) from Cauchy-Schwartz, and (c) from the Lipschitz gradient condition on $g$.
Therefore
\begin{eqnarray*}
		\mathcal{L}(Y,{X}^{k+1};{U};\rho) - \mathcal{L}(Y,{X}^k;{U};\rho)
		&\overset{(a)}{\leq}& \langle \nabla_{{X}}\mL(Y,{X}^{k+1};{U}; \rho),{X}^{k+1}-{X}^k\rangle
		\nonumber\\&&\quad
		-\frac{\rho^k-3L_g}{2}\|{X}^{k+1}-{X}^k\|_2^2   \nonumber\\
		&\overset{(b)}{\leq}& -\frac{\rho^k-3L_g}{2}\|{X}^{k+1}-{X}^k\|_2^2.
	\end{eqnarray*}

	In the dual variables,  using $\{X,Y\} = \{X^{k+1},Y^{k+1}\}$ we have
	\begin{eqnarray*}
	\mathcal{L}(Y,{X};{U}^{k+1};\rho^{k+1}) - \mathcal{L}(Y,{X};{\mu}^{k};\rho^{k})
	&\overset{(a)}{\leq}&\langle {U}^{k+1}-{U}^{k},X - Y\rangle \nonumber + \frac{\rho^{k+1}-\rho^{k}}{2}\|{X} - Y\|_F^2\\
	&\overset{(b)}{\leq}& \frac{\rho^{k+1}+\rho^{k}}{2(\rho^k)^2}\|{U}^{k+1}-{U}^{k}\|_2^2 \nonumber\\
	&\overset{(c)}{\leq} &  L_g^2\;\frac{\rho^{k+1}+\rho^{k}}{2(\rho^k)^2}\|{X}^{k+1}-{X}^{k}\|_2^2
	\end{eqnarray*}
	where (a) follows the definition of $\mathcal{L}$,  (b) follows from the update of $U$, and (c) follows from  Lemma \eqref{lemma:Lambda} since $\rho^k > 0$ for al $k$.
	Incorporating these observations completes the proof.
	\end{proof}

	\begin{lemma}
	    If $\rho^k \geq L_g$ and the objective $g(X)$ is lower-bounded over $\mC$, then the augmented Lagrangian \eqref{eq:auglag-simple} is lower bounded.
	\label{lem:simpledescentlemma}
	\end{lemma}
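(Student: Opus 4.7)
My plan is to exploit the optimality condition for the $X$-update, which Lemma on $\Lambda$ (\ref{lemma:Lambda}) has already given us in the form $\nabla g(X^{k+1}) + U^{k+1} = 0$. This turns the inner product term $\langle U^k, X^k - Y^k\rangle$ in the augmented Lagrangian into a gradient-based expression that can be combined with the descent lemma for $L_g$-smooth $g$.

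First I would observe that if $Y^k \notin \mathcal{C}$, then $\delta_{\mathcal{C}}(Y^k) = +\infty$ and $\mathcal{L}^k = +\infty$, which is trivially bounded below; so we may assume $Y^k \in \mathcal{C}$ and $\delta_{\mathcal{C}}(Y^k) = 0$. Then, substituting $U^k = -\nabla g(X^k)$ into
\[
\mathcal{L}^k = g(X^k) + \langle U^k, X^k - Y^k\rangle + \frac{\rho^k}{2}\|X^k - Y^k\|_F^2,
\]
I obtain
\[
\mathcal{L}^k = g(X^k) + \langle \nabla g(X^k), Y^k - X^k\rangle + \frac{\rho^k}{2}\|X^k - Y^k\|_F^2.
\]

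Next, applying the standard quadratic upper bound from the $L_g$-smoothness of $g$, namely $g(Y^k) \le g(X^k) + \langle \nabla g(X^k), Y^k-X^k\rangle + \tfrac{L_g}{2}\|Y^k - X^k\|_F^2$, rearranged to give $g(X^k) + \langle \nabla g(X^k), Y^k - X^k\rangle \geq g(Y^k) - \tfrac{L_g}{2}\|Y^k - X^k\|_F^2$, yields
\[
\mathcal{L}^k \geq g(Y^k) + \frac{\rho^k - L_g}{2}\|X^k - Y^k\|_F^2.
\]
Under the assumption $\rho^k \geq L_g$, the quadratic term is nonnegative, so $\mathcal{L}^k \geq g(Y^k) \geq \inf_{Y \in \mathcal{C}} g(Y) > -\infty$ by the lower-boundedness of $g$ on $\mathcal{C}$.

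There is no real obstacle here; the proof is essentially a bookkeeping exercise once one recognizes that the dual-update optimality condition lets us eliminate $U^k$ in favor of $\nabla g(X^k)$, after which the descent lemma absorbs the cross term. The only subtlety worth flagging explicitly is that the argument needs to invoke $Y^k \in \mathcal{C}$ (to discard the indicator) rather than $X^k \in \mathcal{C}$, since $g$ is only assumed lower bounded over $\mathcal{C}$ and it is $Y^k$ — not $X^k$ — that is guaranteed feasible from its projection-type update.
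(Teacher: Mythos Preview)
Your proof is correct and follows essentially the same approach as the paper: substitute $U^k=-\nabla g(X^k)$ from the $X$-update optimality condition, apply the $L_g$-smoothness descent inequality to obtain $\mathcal{L}^k \geq g(Y^k) + \tfrac{\rho^k-L_g}{2}\|X^k-Y^k\|_F^2$, and conclude from the lower-boundedness of $g$ over $\mathcal{C}$. The paper phrases the final step contrapositively (``if $\mathcal{L}^k$ unbounded below then $g(Y^k)$ unbounded below''), but the content is identical; your explicit handling of the $Y^k\notin\mathcal{C}$ case is a minor bonus.
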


	\begin{proof}
	%In the following, we will prove that $\mathcal{L}({z}^{k},{x}^k;{U}^k)$ is lower bounded and thus convergent.
 From the $L_g$-Lipschitz continuity of  $\nabla g(X)$ , it follows that
	\bea\label{eq:LHUpp}
	g(X) \geq g(Y) + \langle \nabla g(X),X - Y \rangle -\frac{L_g}{2}\|X - Y\|_F^2
	\eea
	for any $X$ and $Y$. By definition
	\begin{eqnarray}
	\label{eq:CRCP9}
	\mL(Y^{k},{X}^k;{U}^{k};\rho^{k}) &=& g({X}^{k}) + \langle {U}^{k},{X}^{k} - Y^{k}\rangle + \frac{\rho^{k}}{2}\|{X}^{k}-Y^{k}\|_F^2\nonumber\\
	&\overset{(a)}{=}&  g({X}^{k}) - \langle \nabla g(X^k),{X}^{k} - Y^{k}\rangle + \frac{\rho^{k}}{2}\|{X}^{k}-Y^{k}\|_F^2 \nonumber\\
	&\overset{(b)}{\geq}&  g(Y^{k}) + \frac{\rho^{k}-L_g}{2}\|{X}^{k}-Y^{k}\|_F^2 ,
	\end{eqnarray}
	where (a) follows from the optimality in updating $X$ and (b) follows from (\ref{eq:LHUpp}).
	Since $\mL^k$ is unbounded below, then $g(Y^k)$ is unbounded below. Since $Y^k\in \mC$ for all $k$, this implies that $g$ is unbounded below over $\mC$.
\end{proof}

Thus, if $g(X)$ is lower-bounded over $\mC$, then since the sequence $\{\mL({X}^{k},{Y}^k;{U}^k)\}$ is monotonically decreasing and lower bounded, then the sequence $\{\mL({X}^{k},{Y}^k;{U}^k)\}$ converges. Given the monotonic descent of each subproblem (Lemma \ref{lemma:auglag}) and strong convexity of $\mL^k$ with respect to $X$ and $Y$, it is clear that $X^k\to X^*$, $Y^k\to Y^*$ fixed points. Combining with Lemma \ref{lemma:Lambda} gives also $U^k\to U^*$.
%	 we have
%	\bea
%	\|{x}^{k+1} - {x}^{k}\| \to 0,
%	\eea
%	which indicates that the sequence $\{{x}^k\}$ converges to a limit point ($x^*$). Moreover, as $\{{x}^k\}$ converges, the sequence $\{{U}^{k}\}$ also converges to $U^*$ based on (\ref{eq:OptCond3}).
%	From (\ref{eq:LambdaUpd}), ${U}^{k+1} - {U}^{k} = \rho^k ({x}^{k+1} - Y^{k+1})$ and so  ${x}^{k+1} - Y^{k+1} \to 0$; thus $\{Y^{k}\}$ converges to a limit point $Y^*$ and ${x}^*=Y^*$. Additionally, note that $x^* = y^*$ implies
	%In conclusion, $\lim_{k\to\infty}\{Y^k,{x}^k,{U}^k\} = \{Y^*,{x}^*,{U}^*\}$ and ${x}^* = Y^*$.
%	 $\nabla_x \mL(y^*,x^*;U^*;\rho) = \nabla_y \mL(y^*,x^*;U^*;\rho) = 0$ and $x^* = y^* \in \{-1,1\}$ are feasible.
%Thus  Theorem \ref{lemma:LagDecent} is proven.

The proof of Theorem \ref{thm:simple-convergence} easily follows from Lemma \ref{lem:simpledescentlemma}.

\subsection{Linear rate of convergence when $g$ is strongly convex}\label{sec:vectorconvergence-linear}

\begin{lemma}
Consider Alg. \ref{a:vector} with $\rho^k$ constant. Then
collecting the variables all vectorized $x = (X,Y,Y)$,
\[
\mL^{k+1}-\mL^k \leq -c_3\|x^{k+1}-x^k\|^2,
\]
where  $g$ is $H_g$ strongly convex and
\[
c_3 = \max_{\theta\in(0,1)}\min\left\{\theta \left(\frac{\rho+H_g}{2} - \frac{L_g^2}{\rho}\right), (1-\theta )\left(\frac{\rho+H_g}{2H_g} - \frac{L_g^2}{\rho H_g}\right),
-\rho\right\}.
\]

\end{lemma}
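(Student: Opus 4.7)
The plan is to lift the per-iteration descent of Lemma~\ref{lemma:auglag} --- which is phrased only in terms of $\|X^{k+1}-X^k\|_F^2$ and $\|Y^{k+1}-Y^k\|_F^2$ --- to a descent in the full stacked increment
\[
\|x^{k+1}-x^k\|^2 = \|X^{k+1}-X^k\|_F^2 + \|Y^{k+1}-Y^k\|_F^2 + \|U^{k+1}-U^k\|_F^2
\]
by trading a portion of the $X$-decrease against the dual-variable increment via Lemma~\ref{lemma:Lambda}.

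First, I would specialize Lemma~\ref{lemma:auglag} to constant $\rho^k = \rho$ with $g$ being $H_g$-strongly convex, which collapses the coefficient $c^k$ there to $c := \tfrac{\rho+H_g}{2} - \tfrac{L_g^2}{\rho}$, yielding
\[
\mL^{k+1}-\mL^k \;\leq\; -\rho\|Y^{k+1}-Y^k\|_F^2 \;-\; c\,\|X^{k+1}-X^k\|_F^2 .
\]
Under the standing hypothesis $\tfrac{\rho+H_g}{2}\geq\tfrac{L_g^2}{\rho}$ of Theorem~\ref{thm:LagDecent-linear}, one has $c\geq 0$. Next, I would invoke Lemma~\ref{lemma:Lambda} in the form $\|U^{k+1}-U^k\|_F^2 \leq L_g^2\|X^{k+1}-X^k\|_F^2$, which is precisely the tool that allows a dual-increment term to be dominated by the $X$-increment.

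The key maneuver is a convex-combination splitting of the $X$ term: for any $\theta\in(0,1)$, write
\[
-c\|X^{k+1}-X^k\|_F^2 \;=\; -\theta c\|X^{k+1}-X^k\|_F^2 \;-\; (1-\theta)c\|X^{k+1}-X^k\|_F^2 ,
\]
and weaken the second summand via Lemma~\ref{lemma:Lambda} to $-\tfrac{(1-\theta)c}{L_g^2}\|U^{k+1}-U^k\|_F^2$. Plugging back produces a three-term bound whose right-hand side is a negative linear combination of $\|X^{k+1}-X^k\|_F^2$, $\|Y^{k+1}-Y^k\|_F^2$, $\|U^{k+1}-U^k\|_F^2$ with coefficients $\theta c$, $\rho$, $(1-\theta)c/L_g^2$. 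Taking $c_3$ to be the minimum of the three so that the expression dominates $c_3\|x^{k+1}-x^k\|^2$ and then maximizing over $\theta\in(0,1)$ delivers the tightest constant and the advertised inequality.

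The main subtlety is bookkeeping: depending on whether one uses the Lipschitz upper bound $\|U^{k+1}-U^k\|_F\leq L_g\|X^{k+1}-X^k\|_F$ or the strong-convexity companion lower bound $\|U^{k+1}-U^k\|_F\geq H_g\|X^{k+1}-X^k\|_F$ (which holds because $U^{k+1}-U^k = -(\nabla g(X^{k+1})-\nabla g(X^k))$), one obtains superficially different but equivalent closed-form expressions for the optimized $c_3$, and care is needed to match the exact algebraic form stated in the lemma. In every case, positivity of $c_3$ reduces to $c>0$, i.e.\ to the same lower bound on $\rho$ relative to $L_g$ and $H_g$ already assumed in Theorem~\ref{thm:LagDecent-linear}; the end product is a per-iteration contraction of $\mL$ by the full primal-dual state increment, which is exactly the tool needed to drive the linear-rate argument that follows.
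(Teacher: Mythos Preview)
Your scheme is exactly the paper's: invoke Lemma~\ref{lemma:auglag} with constant $\rho$ to get descent in $\|X^{k+1}-X^k\|_F^2$ and $\|Y^{k+1}-Y^k\|_F^2$, split the $X$-term by a parameter $\theta\in(0,1)$, and trade one piece against $\|U^{k+1}-U^k\|_F^2$ via the relation $U^{k+1}-U^k=-(\nabla g(X^{k+1})-\nabla g(X^k))$, then take a minimum over the three coefficients and optimize in $\theta$.

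The one substantive point worth flagging is your remark that the Lipschitz upper bound and the strong-convexity lower bound give ``superficially different but equivalent'' constants. They are not equivalent: only the Lipschitz direction you lead with is valid. From $\|U^{k+1}-U^k\|_F\ge H_g\|X^{k+1}-X^k\|_F$ one gets $-c\|X^{k+1}-X^k\|_F^2 \ge -\tfrac{c}{H_g^2}\|U^{k+1}-U^k\|_F^2$, which is a \emph{lower} bound and cannot be chained after the $\le$ from Lemma~\ref{lemma:auglag}. The paper's proof actually takes this strong-convexity route and so carries a sign slip (and the printed constant $c/H_g$ rather than $c/H_g^2$ or $c/L_g^2$, together with the $-\rho$ in the $\min$, are artifacts of that). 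Your Lipschitz-based derivation, yielding the $U$-coefficient $(1-\theta)c/L_g^2$, is the sound version; just be aware the algebraic form you obtain will not literally match the lemma as stated.
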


\begin{proof}
From Lemma \ref{lemma:auglag} we already have that
\[
\mL^{k+1}-\mL^k \leq -\rho\|Y^{k+1}-Y^{k}\|^2-c\|X^{k+1}-X^{k}\|^2
\]
where for constant $\rho$,
$c =  \frac{\rho+H_g}{2} - \frac{L_g^2}{\rho}$.
Moreover, when $g(X)$ is $H_g$-strongly convex,
\[
\|U^{k+1}-U^k\|_2 = \|\nabla g(X^{k+1})-\nabla g(X^k)\|_2 \geq H_g\|X^{k+1}-X^k\|_2.
\]
Therefore
\[
\mL^{k+1}-\mL^k \leq -\theta\frac{c}{H_g}\|U^{k+1}-U^{k}\|_2^2-(1-\theta)c\|X^{k+1}-X^{k}\|^2.
\]
for any $\theta \in (0,1)$,
We thus have
\begin{eqnarray*}
\mL^{k+1}-\mL^k &\leq & -\theta c\|X^{k+1}-X^{k}\|_F^2
- (1-\theta )\frac{c}{H_g}\|U^{k+1}-U^{k}\|_F^2
-\rho\|Y^{k+1}-Y^{k}\|_F^2\\
&\leq & -\min\left\{\theta c, (1-\theta )\frac{c}{H_g},
-\rho\right\}
\bmat X^{k+1}-X^k\\Y^{k+1}-Y^k\\U^{k+1}-U^k\emat.
\end{eqnarray*}
\end{proof}

Note that this does not mean $\mL$ is strong convex with respect to the \emph{collected} variables $x = (X,Y,Z)$ ($\mL$ is not even convex). But with respect to each variable $X$, $Y$, and $Z$, it is strongly convex.

\begin{lemma}
Again with $\rho^k>1$ constant and collecting $x = (X,Y,Z)$, we have
\[
\mL^{k+1}-\mL^* \leq c_4\|x^{k+1}-x^*\|^2,
\quad c_4 = \min\{L_g+\rho+2,2\rho,1\}
\]
whenever $Y^{k+1}$ and $Y^*$ are both in $\mC$.
\end{lemma}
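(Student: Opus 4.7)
The plan is to expand $\mL^{k+1}-\mL^*$ directly and bound each resulting piece by a multiple of one of $\|X^{k+1}-X^*\|_F^2$, $\|Y^{k+1}-Y^*\|_F^2$, or $\|U^{k+1}-U^*\|_F^2$. Since both $Y^{k+1}$ and $Y^*$ lie in $\mC$, the indicator $\delta_{\mC}$ contributes nothing, and the stationarity conditions \eqref{eq:coordinate-2} give $X^* = Y^*$ together with $\nabla g(X^*) = -U^*$, so that $\mL^* = g(X^*)$ and
\[
\mL^{k+1}-\mL^* \;=\; \bigl(g(X^{k+1})-g(X^*)\bigr) + \langle U^{k+1},\,X^{k+1}-Y^{k+1}\rangle + \frac{\rho}{2}\|X^{k+1}-Y^{k+1}\|_F^2.
\]

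For the first piece I would apply the $L_g$-smooth descent inequality at $X^*$ together with $\nabla g(X^*) = -U^*$ to obtain
\[
g(X^{k+1})-g(X^*) \;\leq\; -\langle U^*,\,X^{k+1}-X^*\rangle + \frac{L_g}{2}\|X^{k+1}-X^*\|_F^2.
\]
For the linear coupling term I would use $X^* = Y^*$ to rewrite $X^{k+1}-Y^{k+1} = (X^{k+1}-X^*) - (Y^{k+1}-Y^*)$ and split $U^{k+1} = U^* + (U^{k+1}-U^*)$; the $U^*$ contribution partly cancels against the $-\langle U^*, X^{k+1}-X^*\rangle$ term already produced, while the residual cross terms are controlled via Young's inequality $2\langle a,b\rangle \leq \|a\|_F^2 + \|b\|_F^2$. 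For the quadratic penalty I would apply $\|X^{k+1}-Y^{k+1}\|_F^2 \leq 2\|X^{k+1}-X^*\|_F^2 + 2\|Y^{k+1}-Y^*\|_F^2$.

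Collecting coefficients after these substitutions, the $\|X^{k+1}-X^*\|_F^2$ term accumulates $L_g/2$ from smoothness, $\rho$ from the expanded penalty, and two additional unit contributions from Young's inequality applied to each half of the coupling term, giving $L_g+\rho+2$; $\|Y^{k+1}-Y^*\|_F^2$ inherits $2\rho$ from the expanded penalty alone; and $\|U^{k+1}-U^*\|_F^2$ inherits $1$ from the Young bound on $\langle U^{k+1}-U^*,X^{k+1}-Y^{k+1}\rangle$. Since each individual squared-norm is dominated by $\|x^{k+1}-x^*\|_F^2$, the stated inequality follows once $c_4$ is read off the list $\{L_g+\rho+2,\,2\rho,\,1\}$.

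I expect the main obstacle to be the careful bookkeeping required to make the Young weights land on exactly these constants; in particular, the coefficient on $\|U^{k+1}-U^*\|_F^2$ must collapse to unity rather than to some $\rho$-dependent quantity, which forces a specific choice of Young splitting that, in turn, pushes the additional $+2$ into the $X$-coefficient. The hypothesis $\rho > 1$ is used implicitly to absorb residual $\rho^{-1}$ factors arising from these Young applications, and the feasibility assumption $Y^{k+1},Y^*\in\mC$ is essential for the $\delta_{\mC}$ terms to cancel rather than producing a $+\infty$ in the gap.
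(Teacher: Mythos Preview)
Your route is genuinely different from the paper's and, as written, has a gap.

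The paper does not expand $\mL^{k+1}-\mL^*$ term by term. Instead it treats the augmented Lagrangian, restricted to $Y\in\mC$ so that the indicator vanishes, as a smooth function of the stacked variable $x=(X,Y,U)$, writes down $\nabla\mL(x)$ explicitly, and bounds $\|\nabla\mL(x_1)-\nabla\mL(x_2)\|$ componentwise to extract the Lipschitz constant $c_4$. The descent lemma at $x^*$ then gives
\[
\mL^{k+1}\;\le\;\mL^*+\langle\nabla\mL(x^*),\,x^{k+1}-x^*\rangle+c_4\|x^{k+1}-x^*\|^2,
\]
and the linear term is dropped in a single step by invoking the optimality of $x^*$.

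Your direct expansion runs into trouble exactly where these two routes diverge. After the cancellation you describe, the piece $\langle U^*,\,X^{k+1}-X^*\rangle$ indeed disappears, but $-\langle U^*,\,Y^{k+1}-Y^*\rangle$ does not: it survives from the split of $\langle U^{k+1},X^{k+1}-Y^{k+1}\rangle$. Young's inequality applied here produces a $\|U^*\|_F^2$ contribution, which is a fixed constant and is \emph{not} dominated by $\|x^{k+1}-x^*\|_F^2$; no choice of Young weights repairs this. In the paper's argument this same quantity is exactly the whole linear remainder $\langle\nabla\mL(x^*),x^{k+1}-x^*\rangle$ (since $\nabla_X\mL(x^*)=0$ and $\nabla_U\mL(x^*)=X^*-Y^*=0$), and it is removed by stationarity of $x^*$ rather than bounded. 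If you want to keep the direct-expansion approach, you must invoke the projection condition $Y^*=\proj_\mC(Y^*+U^*/\rho)$ at this point instead of Young; that is the missing ingredient. Your stated coefficients (e.g.\ $L_g/2+\rho+2$ versus $L_g+\rho+2$, and $\rho$ versus $2\rho$ on the $Y$-block) also do not land where you claim, but that is secondary to the structural issue above.
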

\begin{proof}
Over the domain $\mC$, the augmented Lagrangian can be written as
\[
\mL(x) = g(X) + \langle U, X-Y\rangle + \frac{\rho}{2}\|X-Y\|_F^2
\]
with gradient
$
\nabla \mL(x) =
\bmat \nabla g(X) + U + \rho(X-Y)\\
-Y+\rho(Y-X)\\
X-Y
\emat
$
and thus
\begin{eqnarray*}
\|\nabla \mL(x_1)-\nabla \mL(x_2)\|^2_F &=&
\|\nabla_X\mL(x_1)-\nabla_X\mL(x_2)\|_F^2
\nonumber\\&&\quad +
\|\nabla_Y\mL(x_1)-\nabla_Y\mL(x_2)\|_F^2 +
\|\nabla_U\mL(x_1)-\nabla_U\mL(x_2)\|_F^2\\
&\leq &   (L_g+\rho+2)\|X_1-X_2\|_F^2 + (2\rho)\|Y_1-Y_2\|_F^2 + \|U_1-U_2\|_F^2\\
&\leq& \min\{L_g+\rho+2,2\rho,1\}\|x_2-x_1\|_2^2
\end{eqnarray*}
which reveals the Lipschitz smoothness constaint for $\mL$ as
$
c_4 = \min\{L_g+\rho+2,2\rho,1\}.
$
Then using first-order optimality conditions,
\begin{eqnarray*}
\mL^{k+1} &\leq& \mL^* + \langle \nabla \mL(x^*),x^{k+1}-x^*\rangle +  c_4\|x^{k+1}-x^*\|_2^2\\
&\overset{(a)}{\leq} & \mL^*  +  c_4\|x^{k+1}-x^*\|_2^2
\end{eqnarray*}
where (a) follows from the optimality of $\mL^*$.

\end{proof}

\begin{lemma}
Consider $g(x)$ $H_g$-strongly convex in  $x$, and $\rho$ large enough so that $c_3 > 0$. Then
 the number of steps for $|\mL^k-\mL^0|\leq \epsilon$ is  $O(\log(1/\epsilon))$.
\end{lemma}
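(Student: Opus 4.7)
The plan is to combine the previous two lemmas with the $H_g$-strong convexity of $g$ to obtain a per-iteration contraction of the form $\mL^{k+1}-\mL^* \leq \alpha(\mL^k-\mL^*)$ for some fixed $\alpha \in (0,1)$; unrolling this estimate then gives the $O(\log(1/\epsilon))$ iteration count for reaching $|\mL^k-\mL^*|\leq\epsilon$.

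First, I would rewrite the descent lemma as $c_3\|x^{k+1}-x^k\|^2 \leq \mL^k-\mL^{k+1} = (\mL^k-\mL^*) - (\mL^{k+1}-\mL^*)$. The goal is to lower-bound the left-hand side by a multiple of $\mL^{k+1}-\mL^*$, since then rearrangement yields the desired contraction with rate
\[
\alpha = \frac{\kappa}{\kappa+c_3}, \qquad \text{where } \mL^{k+1}-\mL^* \leq \kappa\|x^{k+1}-x^k\|^2
\]
is a so-called error bound to be established. The bulk of the work is in producing the constant $\kappa$.

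To produce the error bound, I would chain through the upper-bound lemma $\mL^{k+1}-\mL^* \leq c_4\|x^{k+1}-x^*\|^2$ and control $\|X^{k+1}-X^*\|_F$, $\|Y^{k+1}-Y^*\|_F$, and $\|U^{k+1}-U^*\|_F$ individually by multiples of $\|x^{k+1}-x^k\|_F$. For $X$, combining the first-order optimality conditions $\nabla g(X^{k+1})+U^{k+1}=0$ at the iterate and $\nabla g(X^*)+U^*=0$ at the limit with $H_g$-strong convexity yields $H_g\|X^{k+1}-X^*\|_F \leq \|U^{k+1}-U^*\|_F$. For $U$, Lemma~\ref{lemma:Lambda} gives $\|U^{k+1}-U^k\|_F \leq L_g\|X^{k+1}-X^k\|_F$, and for $Y$, the dual-update identity $X^{k+1}-Y^{k+1} = (U^{k+1}-U^k)/\rho$ together with feasibility $X^*=Y^*$ converts a bound on $\|X^{k+1}-X^*\|_F$ and $\|U^{k+1}-U^k\|_F$ into one on $\|Y^{k+1}-Y^*\|_F$.

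The hardest step will be turning the one-step dual difference bound from Lemma~\ref{lemma:Lambda} into a bound on the full distance to the limit $\|U^{k+1}-U^*\|_F$, since a priori that lemma only controls consecutive iterates. I expect to leverage the summability of $\|x^{k+1}-x^k\|_F^2$ coming from the telescoped descent, combined with the strong convexity of $g$ in $X$, to close this loop and extract a single iterate-independent constant $\kappa$. Once this is in place, the contraction $\mL^{k+1}-\mL^* \leq \alpha(\mL^k-\mL^*)$ unrolls to $\mL^k-\mL^* \leq \alpha^k(\mL^0-\mL^*)$, so that $k \geq \log_{1/\alpha}\!\bigl((\mL^0-\mL^*)/\epsilon\bigr) = O(\log(1/\epsilon))$ iterations suffice to reach accuracy $\epsilon$.
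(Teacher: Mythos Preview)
Your overall scheme---combine the descent estimate $\mL^k-\mL^{k+1}\geq c_3\|x^{k+1}-x^k\|^2$ with the Lipschitz-type upper bound $\mL^{k+1}-\mL^*\leq c_4\|x^{k+1}-x^*\|^2$ to produce a contraction $\mL^{k+1}-\mL^*\leq\alpha(\mL^k-\mL^*)$ and then unroll---is exactly the paper's. The paper's argument is a three-line telescoping: it writes
\[
\mL(x^k)-\mL(x^*)=(\mL(x^k)-\mL(x^{k+1}))+(\mL(x^{k+1})-\mL(x^*))\geq c_3\|x^{k+1}-x^k\|^2+(\mL(x^{k+1})-\mL(x^*))\geq\Bigl(\tfrac{c_3}{c_4}+1\Bigr)(\mL(x^{k+1})-\mL(x^*)),
\]
reads off the rate $c_4/(c_3+c_4)$, and solves for $k$. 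There is no separate error-bound step.

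The discrepancy you correctly flag---that the preceding lemma controls $\mL^{k+1}-\mL^*$ by $c_4\|x^{k+1}-x^*\|^2$ rather than $c_4\|x^{k+1}-x^k\|^2$---is simply not addressed in the paper; the final inequality above uses the latter without justification. So your instinct that something is missing is right, and the paper does not supply it. However, the particular fix you sketch is circular: bounding $\|U^{k+1}-U^*\|_F$ by summing $\|U^{j+1}-U^j\|_F$ over $j\geq k$ and then controlling that tail by a constant times $\|x^{k+1}-x^k\|_F$ already presupposes the geometric decay you are trying to establish. If you want to repair the argument, a cleaner route is to prove the error bound $\mL^{k+1}-\mL^*\leq\kappa\|x^{k+1}-x^k\|^2$ directly from the optimality conditions at $x^{k+1}$ (using that $\nabla g(X^{k+1})+U^{k+1}=0$ and $X^{k+1}-Y^{k+1}=(U^{k+1}-U^k)/\rho$), rather than detouring through $\|x^{k+1}-x^*\|$.
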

This proof is standard in the linear convergence of block coordinate descent when the objective is strongly convex. Note that $\mL$ is not strongly convex or even convex, but still all the steps hold.
\begin{proof}
Take $x^k = \{X^k, Y^k, U^k\}$ and $x^* = \{X^*, Y^*, U^*\}$. Then
\begin{eqnarray*}
\mL(x^k)-\mL(x^*) &=&\mL(x^k)-\mL(x^{k+1}) + \mL(x^{k+1}) - \mL(x^*)\\
&\geq& c_3\|x^{k+1}-x^k\|^2 + \mL(x^{k+1}) - \mL(x^*)\\
&\geq& \left(\frac{c_3}{c_4}+1\right)( \mL(x^{k+1}) - \mL(x^*))
\end{eqnarray*}
Therefore
\[
\frac{\mL(x^k)-\mL(x^*)}{\mL(x^0)-\mL(x^*)} \leq \left(\frac{c_4}{c_4+c_3}\right)^k
\]
and so
\[
\mL(x^k)-\mL(x^*)\leq \epsilon
\]
if
\[
k \geq D_1\log(1/\epsilon) + D_2
 \]
 where
 \[
 D_1 = \log^{-1}\left(\frac{c_4+c_3}{c_4}\right),\qquad D_2 = \frac{\log(\mL(x^0)-\mL(x^*))}{\log\left(\frac{c_4+c_3}{c_4}\right)}.
 \]

\end{proof}

%\section*{Acknowledgments} %we add this in after we get accepted, I think, to maintain anonymity.
%\red{grant acknowledgements go here}
%This work was done while at Technicolor Research, Palo Alto, CA.

\bibliographystyle{siamplain}
\bibliography{refs}
\end{document}

% --- supplement: wasted/ex_supplement.tex ---

\maketitle

\section{A detailed example}

Here we include some equations and theorem-like environments to show
how these are labeled in a supplement and can be referenced from the
main text.
Consider the following equation:
\begin{equation}
  \label{eq:suppa}
  a^2 + b^2 = c^2.
\end{equation}
You can also reference equations such as \cref{eq:matrices,eq:bb} 
from the main article in this supplement.

\lipsum[100-101]

\begin{theorem}
  An example theorem.
\end{theorem}

\lipsum[102]
 
\begin{lemma}
  An example lemma.
\end{lemma}

\lipsum[103-105]

Here is an example citation: \cite{KoMa14}.

\section[Proof of Thm]{Proof of \cref{thm:bigthm}}
\label{sec:proof}

\lipsum[106-112]

\section{Additional experimental results}
\Cref{tab:foo} shows additional
supporting evidence. 

\begin{table}[htbp]
{\footnotesize
  \caption{Example table}  \label{tab:foo}
\begin{center}
  \begin{tabular}{|c|c|c|} \hline
   Species & \bf Mean & \bf Std.~Dev. \\ \hline
    1 & 3.4 & 1.2 \\
    2 & 5.4 & 0.6 \\ \hline
  \end{tabular}
\end{center}
}
\end{table}

\bibliographystyle{siamplain}
\bibliography{references}